%% file: main.tex
\documentclass{article}
\usepackage[preprint]{neurips_2026}

\usepackage[utf8]{inputenc}
\usepackage[T1]{fontenc}
\usepackage{hyperref}
\usepackage{url}
\usepackage{booktabs}
\usepackage{multirow}
\usepackage{longtable}
\usepackage{amsfonts}
\usepackage{amsmath}
\usepackage{amssymb}
\usepackage{nicefrac}
\usepackage{microtype}
\usepackage{graphicx}
\usepackage{enumitem}
\usepackage{xcolor}
\usepackage{algorithm}
\usepackage{algorithmic}
\usepackage{natbib}
\usepackage{amsthm}

\usepackage{tikz}
\usetikzlibrary{arrows.meta, positioning, calc}
\usepackage{xcolor}
\definecolor{inblue}{HTML}{DCE8F7}
\definecolor{inblueborder}{HTML}{5B84C4}

\definecolor{prepfill}{HTML}{F6E7CF}
\definecolor{prepborder}{HTML}{D89B1D}

\definecolor{modelfill}{HTML}{DDE9D7}
\definecolor{modelborder}{HTML}{7DA35A}

\definecolor{parserfill}{HTML}{F2D7D5}
\definecolor{parserborder}{HTML}{C46B61}

\definecolor{neutralfill}{HTML}{EFEFEF}
\definecolor{neutralborder}{HTML}{888888}

\definecolor{textdark}{HTML}{2B2B2B}
\definecolor{myblue}{HTML}{4C78A8}
\definecolor{mybluefill}{HTML}{EAF2FB}
\definecolor{myorange}{HTML}{F28E2B}
\definecolor{myorangefill}{HTML}{FDF1E3}
\definecolor{mygray}{HTML}{6B7280}
\definecolor{mygrayfill}{HTML}{F3F4F6}
\definecolor{mydark}{HTML}{1F2937}
\newtheorem{theorem}{Theorem}
\newtheorem{proposition}{Proposition}
\newtheorem{lemma}{Lemma}

\title{Calibrate, Don't Curate: \\ Label-Efficient Estimation from Noisy LLM Judges}

\author{
  Yanran Li \\
  Department of Biostatistics \\
  Columbia University \\
  \texttt{yl5465@columbia.edu} \\
}

\begin{document}

\maketitle

\begin{abstract}

Multi-judge evaluation is increasingly used to assess LLMs and reward models, and the prevailing heuristic is to curate: keep the most accurate judges and discard weaker ones. We show that this heuristic can reverse when the target is not point accuracy, but calibrated probabilistic evaluation from a labeled calibration set. Holding the aggregation and calibration procedures fixed, we compare accuracy-ranked top-$k$ judge selection with using the full judge panel. Across four labeled pairwise-evaluation benchmarks spanning LLM-as-judge and reward-model settings, the calibrated full panel consistently outperforms accuracy-based selection. On RewardBench2, retaining all judges achieves negative log-likelihood (NLL) of $0.006$ versus $0.013$ under top-5 selection, halving the calibration error. This advantage persists after judge-family deduplication and against stronger same-pipeline subset search. We explain this reversal with oracle analyses showing that the optimal calibrated risk under proper scoring rules cannot increase when additional judge signals are made available, and that even below-chance judges can be useful when their biases are learnable and their signals are non-redundant. The resulting operating principle is simple: in multi-judge evaluation with labeled calibration data, do not discard weak judges by accuracy alone; keep them when they are parseable, non-redundant, and calibratable.
\end{abstract}
\input{sections/introduction}
\input{sections/method}

\input{sections/theory}
\input{sections/experiments}
\input{sections/related_work}
\input{sections/conclusion}

\bibliographystyle{plainnat}
\bibliography{sections/references}

\appendix
\input{sections/appendix}


\end{document}

%% file: sections/introduction.tex
\section{Introduction}
\label{sec:intro}

Pairwise comparison by multiple judges is the dominant paradigm for evaluating large language models~\citep{chiang2024chatbot,zheng2023judging}, training reward models~\citep{ouyang2022training}, and aggregating crowdsourced annotations~\citep{chen2013crowd}.
These judges vary substantially in quality: LLM judges exhibit position and verbosity biases~\citep{zheng2023judging,ye2024justice}, and reward models range from near-random to near-expert~\citep{lambert2024rewardbench}.
The standard response is to \emph{curate}: select a panel of the best judges and discard the rest.
This is sensible when the goal is point accuracy~\citep{empirical2026rewardbench2,metajudges2025}. 
But accuracy is not the only goal.
Increasingly, evaluation systems must answer not just \emph{which response is better?}\ but \emph{how confident should we be?}\ Many evaluation settings require reliable confidence estimates, not just accurate winners. In this regime, pruning weak judges is not always the right choice: after calibration, keeping more judges can outperform selecting only the most accurate ones.

We study calibrated aggregation for labeled pairwise-comparison benchmarks, where held-out ground truth is available for calibration. These benchmarks, including JudgeBench~\citep{judgebench2025}, RewardBench~\citep{lambert2024rewardbench}, RewardBench2~\citep{malik2025rewardbench2}, and LLMBar~\citep{zeng2024llmbar}, raise a practical question: should we curate a small panel of high-accuracy judges, or calibrate the full panel? Across four benchmarks, calibrating the full panel usually outperforms top-$k$ accuracy-based pruning and remains competitive with stronger pruning baselines. The lesson is not to keep every judge blindly, but to avoid discarding judges solely because their point accuracy looks weak.

Why can weak judges help once calibration is available? The relevant question is not only whether a judge is accurate, but whether its votes carry learnable signal about the correct answer. A weak judge may still help estimate confidence: its disagreements can mark uncertain items, and a stable bias can be corrected or reversed. Section 3 studies this idea in an idealized setting where judge reliability is known; our experiments test whether the same principle remains useful when reliability must be estimated from limited calibration data.

This interpretation also explains when inclusion helps most. Accuracy-based pruning removes judges that look weak as point predictors, but their disagreements can still help identify uncertain items. Full-panel calibration keeps this information and learns how much to trust it. The gain is largest when the added judges contain learnable signal, and smallest when the panel is already redundant.

\subsection{Contributions}
\begin{enumerate}[leftmargin=*,itemsep=2pt]
    \item We present the first systematic evidence of a calibration reversal: full-panel calibration usually beats accuracy-based judge pruning in multi-judge LLM evaluation.

    \item We explain why weak judges can help: point accuracy is not the only criterion; learnable bias and disagreement can improve probability estimation.

    \item We stress-test inclusion against stronger curators, judge-family deduplication, and same-pipeline subset search; even the strongest curator does not improve on full-panel calibration.

    \item We identify post-hoc calibration as the main practical lever and characterize the boundary cases where inclusion can fail.
\end{enumerate}

%% file: sections/method.tex
\section{Setup and Methods}
\label{sec:method}

We consider $K$ judges evaluating $N$ labeled pairwise comparisons. Each judge $k \in [K]$ produces a verdict $y_{tk} \in \{0,1,-1\}$ for comparison $t$, where $-1$ denotes a missing verdict. A labeled calibration split provides ground-truth labels $z_t \in \{0,1\}$ for fitting the aggregation and calibration maps; a disjoint evaluation split is used only for reporting. Our goal is to estimate a calibrated predictive probability
$\hat{p}_t = \hat{P}(z_t = 1 \mid \{y_{tk}\}_{k=1}^K)$
such that $\mathbb{P}(z_t = 1 \mid \hat{p}_t = p) \approx p$, and optionally to construct valid prediction sets with coverage at least $1-\alpha$.

The big picture is a four-step statistical pipeline: (1) choose a judge panel and aggregate its votes into a raw probability; (2) correct residual bias; (3) apply a distribution-level calibration map; and (4) optionally wrap the result in conformal prediction sets. Figure~\ref{fig:overview} summarizes the comparison between accuracy-based pruning and full-panel inclusion. The inclusion-vs-selection experiments hold the aggregation and calibration families fixed across the two arms, so the only changed object is the judge panel. The nested model space
$\mathcal{M}_\omega \subset \mathcal{M}_{\omega,\delta} \subset \mathcal{M}_{\omega,\delta,G}$
keeps this progression explicit: each step expands the post-hoc model class without increasing calibrated log-loss asymptotically (Theorem~\ref{thm:calibration}).


\begin{figure*}[t]
\centering
\resizebox{0.9\textwidth}{!}{
\begin{tikzpicture}[
    >=Latex,
    font=\scriptsize,
    flow/.style={-Latex, semithick},
    aux/.style={semithick, dashed, draw=mygray}, 
    merge/.style={semithick},
    base/.style={
        draw,
        rounded corners=2pt,
        align=center,
        minimum height=6.8mm,
        inner sep=2pt,
        text=mydark
    },
    shared/.style={base, draw=mygray, fill=mygrayfill},
    curate/.style={base, draw=myorange, fill=myorangefill},
    include/.style={base, draw=myblue, fill=mybluefill},
    lab/.style={font=\bfseries\scriptsize}
]

\coordinate (P0) at (0,0);          

\coordinate (P1) at (3.4,1.20);     
\coordinate (P2) at (3.4,-1.20);    

\coordinate (P3) at (5.9,1.20);     
\coordinate (P4) at (5.9,-1.20);    

\coordinate (P5) at (8.4,1.20);     
\coordinate (P6) at (8.4,-1.20);    

\coordinate (P7) at (11.1,1.20);    
\coordinate (P8) at (11.1,-1.20);   

\coordinate (PL) at (8.4,0.00);     

\coordinate (PJ) at (13.35,0.00);   
\coordinate (P9) at (14.9,0.00);    

\node[shared, minimum width=2.25cm] (pool) at (P0)
{Judge pool\\[-1pt] native + API/GPU};

\node[curate, minimum width=2.05cm] (prune) at (P1)
{Top-$k$ pruning};

\node[include, minimum width=2.05cm] (keep) at (P2)
{Keep all judges};

\node[shared, minimum width=1.55cm] (agg1) at (P3)
{Aggregator};

\node[shared, minimum width=1.55cm] (agg2) at (P4)
{Aggregator};

\node[curate, minimum width=1.65cm] (cal1) at (P5)
{Calibrator};

\node[include, minimum width=1.65cm] (cal2) at (P6)
{Calibrator};

\node[curate, minimum width=2.05cm] (out1) at (P7)
{Calibrated\\probabilities};

\node[include, minimum width=2.05cm] (out2) at (P8)
{Calibrated\\probabilities};

\node[shared, minimum width=1.45cm] (labels) at (PL)
{Held-out\\labels};

\node[shared, minimum width=2.55cm] (eval) at (P9)
{Evaluation\\[-1pt] NLL / Brier / Accuracy\\[-1pt] Selective prediction};

\node[lab, text=myorange, anchor=east] at (2.15,1.45) {Curate};
\node[lab, text=myblue,   anchor=east] at (2.15,-1.45) {Include};


\draw[flow, draw=myorange] (pool.east) -- ++(0.45,0) |- (prune.west);
\draw[flow, draw=myblue]   (pool.east) -- ++(0.45,0) |- (keep.west);

\draw[flow, draw=myorange] (prune.east) -- (agg1.west);
\draw[flow, draw=myorange] (agg1.east) -- (cal1.west);
\draw[flow, draw=myorange] (cal1.east) -- (out1.west);

\draw[flow, draw=myblue] (keep.east) -- (agg2.west);
\draw[flow, draw=myblue] (agg2.east) -- (cal2.west);
\draw[flow, draw=myblue] (cal2.east) -- (out2.west);

\draw[aux] (labels.north) -- (cal1.south);
\draw[aux] (labels.south) -- (cal2.north);

\draw[merge, draw=myorange] (out1.east) -- ++(0.40,0) |- (PJ);
\draw[merge, draw=myblue]   (out2.east) -- ++(0.40,0) |- (PJ);
\draw[flow, draw=mygray] (PJ) -- (eval.west);

\end{tikzpicture}
}
\caption{Overview of the curation and full-panel calibration pipelines. The two arms use the same aggregation and calibration families; only the judge-panel choice changes.}
\label{fig:overview}
\end{figure*}

\subsection{Step 1: Heterogeneity-Aware Judge Aggregation ($\mathcal{M}_\omega$)}
\label{sec:step1}

Step~1 maps a selected panel's verdict vector to a raw probability. Our default aggregator is an integrated Bayesian one-coin model, a closed-form implementation of the log-odds weighting principle in Proposition~\ref{prop:logodds}. From calibration data, judge $k$ receives $a_k \sim \mathrm{Beta}(c_k+1,n_k-c_k+1)$, where $c_k$ is its number of correct calibration verdicts and $n_k$ its number of non-missing calibration verdicts. Integrating over $a_k$ gives
\begin{equation}
\label{eq:integrated-posterior}
P(z_t = 1 \mid \mathbf{y}_t) = \sigma\!\left(\sum_{k \in J_t} \log \frac{\mathbb{E}[a_k^{y_{tk}}(1-a_k)^{1-y_{tk}}]}{\mathbb{E}[a_k^{1-y_{tk}}(1-a_k)^{y_{tk}}]}\right),
\end{equation}
where $J_t$ contains the non-missing judges for item $t$. Equivalently, judge $k$ contributes $\log(\alpha_k/\beta_k)$ for a positive verdict and $\log(\beta_k/\alpha_k)$ for a negative verdict, with $\alpha_k=c_k+1$ and $\beta_k=n_k-c_k+1$. Judges with uncertain accuracy have $\alpha_k\approx\beta_k$ and therefore near-zero weight. Dawid--Skene expectation-maximization (EM)~\citep{dawid1979maximum} fits the same role and behaves similarly after post-hoc calibration (Section~\ref{sec:cross-ablation}; Appendix~\ref{app:aggregator-comparison}). Curation happens before aggregation: top-$k$ ranks judges on the calibration split by posterior mean accuracy $\alpha_k/(\alpha_k+\beta_k)$, keeps the first $k$, and then uses the same aggregator and calibrator as the all-judges arm. Full inclusion sets $S=[K]$.

\subsection{Step 2: Residual Bias Correction ($\mathcal{M}_{\omega,\delta}$)}
\label{sec:step2}

Step~2 captures systematic bias left by the weighted aggregator with a low-capacity post-hoc map from the Step~1 score to the empirical label frequency:
\begin{equation}
\text{logit}(\hat{p}_t^{\text{corr}}) = a \cdot \text{logit}(\hat{p}_t) + b,
\end{equation}
where $\hat{p}_t$ is the Step~1 posterior and $(a,b)$ are fit by logistic regression on calibration labels. This Platt-style correction~\citep{platt1999probabilistic} captures global scale and shift bias. When item features $\mathbf{x}_t$ are available, such as topic or judge agreement level, we allow the residual map to vary with them:
\begin{equation}
\text{logit}(\hat{p}_t^{\text{corr}}) = a \cdot \text{logit}(\hat{p}_t) + \mathbf{x}_t^\top \boldsymbol{\gamma} + b.
\end{equation}
This residual correction enlarges the model class from
$\mathcal{M}_{\omega}$ to $\mathcal{M}_{\omega,\delta}$.
Since Step~1 is nested, the optimal asymptotic proper loss cannot increase;
Theorem~\ref{thm:calibration} formalizes this argument.

\subsection{Step 3: Distribution Calibration ($\mathcal{M}_{\omega,\delta,G}$)}
\label{sec:step3}

Step~3 applies a distribution-level calibration map to the corrected probability from Step~2. For binary outcomes we use beta calibration~\citep{kull2017beta},
\begin{equation}
\label{eq:beta-cal}
g(p) = \sigma\!\left(a \cdot \log p + b \cdot \log(1-p) + c\right),
\end{equation}
where $(a,b,c)$ are fit by minimizing negative log-likelihood (NLL) on the calibration split. This map generalizes the single-slope logistic correction in Step~2: Platt scaling is the tied-slope subfamily $b=-a$, the identity map is $(a,b,c)=(1,-1,0)$, and the untied slopes allow asymmetric corrections near confident positives and confident negatives. To keep this extra flexibility local to calibration, we regularize toward the identity with elastic net penalty $\lambda[\alpha(|a-1|+|b+1|+|c|)+(1-\alpha)((a-1)^2+(b+1)^2+c^2)]$, using $\alpha=0.5$ and fixing $\lambda=0.01$ in the main experiments. Lower-capacity calibrators and regularization sensitivity are evaluated in Appendix~\ref{app:matched-cal} and Appendix~\ref{app:additional}.

\subsection{Step 4: Conformal Coverage Guarantee}
\label{sec:step4}

Steps~1--3 output calibrated probabilities; Step~4 optionally wraps them with split conformal prediction~\citep{tibshirani2019conformal} to obtain finite-sample marginal coverage. The conformal calibration slice is disjoint from the data used to fit aggregation, residual correction, and beta calibration. Given the Step~3 probability $\hat{p}_t^{\text{final}}$, define the nonconformity score
\begin{equation}
s_t = 1 - \hat{p}_t^{\text{final}}(z_t) = \begin{cases} 1 - \hat{p}_t^{\text{final}} & \text{if } z_t = 1 \\ \hat{p}_t^{\text{final}} & \text{if } z_t = 0. \end{cases}
\end{equation}
Let $\hat{q}_{1-\alpha}$ be the $\lceil(1-\alpha)(n_{\mathrm{conf}}+1)\rceil/n_{\mathrm{conf}}$ empirical quantile of these scores on the conformal slice. The prediction set is
\begin{equation}
\mathcal{C}(x_t)=\{z:1-\hat{p}_t^{\text{final}}(z)\leq\hat{q}_{1-\alpha}\}.
\end{equation}

\subsection{Variance Decomposition}
\label{sec:decomposition}

Finally, we use a diagnostic variance decomposition to locate the main sources of evaluation uncertainty. This diagnostic is not used to fit Steps~1--4. We fit a two-parameter logistic (2PL) item response theory (IRT) model~\citep{lord1968irt},
$P(\text{correct}\mid j,i)=\sigma(\alpha_j(\theta_j-\beta_i))$, and summarize the induced logit-variance decomposition as
\begin{equation}
\underbrace{\text{Var}[\text{logit}(\hat{p}_t)]}_{\text{total}} = \underbrace{\text{Var}_j[\theta_j]}_{\text{judge noise}} + \underbrace{\text{Var}_i[\beta_i]}_{\text{item difficulty}} + 
\underbrace{\epsilon}_{\text{residual}}.
\end{equation}
This connects our multi-judge setting to the prediction-noise/data-noise decomposition of~\citet{wang2025noise}, and the dataset-level decomposition results are reported in Appendix~\ref{app:epi-alea}.

%% file: sections/theory.tex
\section{Theoretical Rationale for Inclusion}
\label{sec:theory}

Our theoretical contribution is a statistical account of \emph{why calibration
changes the judge selection problem}. We give two inclusion results: oracle-calibrated proper-loss risk is monotone in the judge set (Proposition~\ref{prop:monotone-risk}), and even anti-expert judges contribute positively to an exponentially decaying error bound (Theorem~\ref{thm:calibrated-cjt}). Both are oracle statements; Section~\ref{sec:experiments} closes the finite-sample gap empirically.

\begin{proposition}[Monotonicity of oracle-calibrated risk]
\label{prop:monotone-risk}
Let $Y \in \{0,1\}$, and let $X_k \in \{-1,+1\}$ be judge $k$'s label under the one-coin model with accuracy $p_k$ and conditional independence given $Y$.
For any subsets $S \subseteq T \subseteq [K]$, define the oracle-calibrated predictor $\eta_S(\mathbf{x}_S) := \Pr(Y{=}1 \mid \mathbf{X}_S = \mathbf{x}_S)$.
Then for both Brier score and NLL:
\begin{align}
R_{\mathrm{Brier}}(T) &= \mathbb{E}[\mathrm{Var}(Y \mid \mathbf{X}_T)] \leq \mathbb{E}[\mathrm{Var}(Y \mid \mathbf{X}_S)] = R_{\mathrm{Brier}}(S), \\
R_{\mathrm{NLL}}(T) &= H(Y \mid \mathbf{X}_T) \leq H(Y \mid \mathbf{X}_S) = R_{\mathrm{NLL}}(S).
\end{align}
Strict inequality holds if and only if $I(Y; \mathbf{X}_{T \setminus S} \mid \mathbf{X}_S) > 0$.
\end{proposition}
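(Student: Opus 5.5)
The plan is to prove both inequalities by treating $\mathbf{X}_S$ as a coarsening of $\mathbf{X}_T$: since $S\subseteq T$, $\mathbf{X}_S$ is a deterministic function of $\mathbf{X}_T$, so the $\sigma$-algebra generated by $\mathbf{X}_S$ is contained in that of $\mathbf{X}_T$. The one-coin and conditional-independence assumptions are not needed for monotonicity itself; they only guarantee that all conditional laws are well defined on the finite space $\{-1,+1\}^{|T|}$. They also make $\eta_T$ computable via Proposition~\ref{prop:logodds}. First I identify the risks. For the Brier score, the oracle predictor $\eta_S$ is the conditional mean, so $\mathbb{E}[(Y-\eta_S(\mathbf{X}_S))^2]=\mathbb{E}[\mathrm{Var}(Y\mid\mathbf{X}_S)]$. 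For NLL, $\mathbb{E}[-Y\log\eta_S-(1-Y)\log(1-\eta_S)]=H(Y\mid\mathbf{X}_S)$ by the definition of conditional entropy. This justifies the equalities in the statement.

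For the Brier inequality, I will use the tower property with $\eta_S=\mathbb{E}[\eta_T\mid\mathbf{X}_S]$. The law of total variance conditional on $\mathbf{X}_S$ gives
\begin{equation}
\mathbb{E}[\mathrm{Var}(Y\mid\mathbf{X}_S)]=\mathbb{E}[\mathrm{Var}(Y\mid\mathbf{X}_T)]+\mathbb{E}\big[(\eta_T-\eta_S)^2\big],
\end{equation}
because the cross term vanishes by orthogonality of $Y-\eta_T$ to functions of $\mathbf{X}_T$. For NLL, the chain rule gives $H(Y\mid\mathbf{X}_S)-H(Y\mid\mathbf{X}_T)=I(Y;\mathbf{X}_{T\setminus S}\mid\mathbf{X}_S)\ge 0$, by nonnegativity of conditional mutual information, which is an expected KL divergence.

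The strictness characterization is where care is needed. For NLL it is immediate, since the gap equals the conditional mutual information exactly. For Brier, the gap is $\mathbb{E}[(\eta_T-\eta_S)^2]$, which is positive iff $\eta_T\neq\eta_S$ with positive probability. Because $Y$ is binary, its conditional law given $\mathbf{X}_T$ is determined by $\eta_T$. Hence $\eta_T=\eta_S$ a.s. iff $Y\perp\mathbf{X}_{T\setminus S}\mid\mathbf{X}_S$, which holds iff the conditional mutual information is zero. On a finite space, "a.s." means on every atom of positive probability. Both strictness conditions therefore coincide with $I(Y;\mathbf{X}_{T\setminus S}\mid\mathbf{X}_S)>0$.

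The main obstacle is this binary-law reduction: showing that equality of conditional means forces conditional independence. I will state it explicitly through the KL/mutual-information identity $I=\mathbb{E}[\mathrm{KL}(\mathrm{Bern}(\eta_T)\,\|\,\mathrm{Bern}(\eta_S))]$. This makes both equality cases read off the same event $\{\eta_T\neq\eta_S\}$. I will also handle the boundary case $p_k\in\{0,1\}$ using the convention $0\log 0=0$.
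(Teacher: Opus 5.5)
Your proposal is correct and follows essentially the same route as the paper: nesting of $\sigma(\mathbf{X}_S)\subseteq\sigma(\mathbf{X}_T)$, the law of total variance for Brier, and the entropy chain rule with nonnegative conditional mutual information for NLL. The paper only asserts the Brier strictness condition. Your binary-law reduction, showing that the Brier gap $\mathbb{E}[(\eta_T-\eta_S)^2]$ vanishes exactly when $I(Y;\mathbf{X}_{T\setminus S}\mid\mathbf{X}_S)=0$, supplies the missing argument.
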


The proof is in Appendix~\ref{app:proof-monotone-risk}. Selection means choosing $S \subset T$, which discards conditional information and can only increase calibrated loss under any strictly proper scoring rule. Section~\ref{sec:selection-calibration} tests whether this monotonicity survives finite-sample estimation. The following theorem makes this quantitative: it gives an exponential error bound extending the Condorcet Jury Theorem to weak and anti-expert judges~(\citealp{condorcet1785essai}; see \citealp{berend1998sharp}).

\begin{theorem}[Calibrated Jury Theorem]
\label{thm:calibrated-cjt}
Under the one-coin model with labels encoded as $Y\in\{-1,+1\}$, equal priors $\Pr(Y{=}+1)=1/2$, and conditionally independent judge votes $X_k\in\{-1,+1\}$ satisfying $\Pr(X_k=Y\mid Y)=p_k\in(0,1)\setminus\{1/2\}$, the Bayes-optimal decision $\delta_K(\mathbf{x}) = \mathrm{sign}\{\sum_{k=1}^K \alpha_k x_k\}$, with arbitrary tie-breaking at zero and weights $\alpha_k = \log \frac{p_k}{1-p_k}$, satisfies:
\begin{equation}
\Pr(\delta_K \neq Y) \leq \tfrac{1}{2}\exp\!\Bigl(-\sum_{k=1}^K C_k\Bigr), \quad \text{where } C_k = -\tfrac{1}{2}\log\bigl(1-(2p_k - 1)^2\bigr) \geq \tfrac{1}{2}(2p_k-1)^2.
\end{equation}
In particular:
\begin{enumerate}[leftmargin=*,itemsep=1pt]
\item Judges with $p_k < 1/2$ contribute $C_k > 0$ (the optimal weight $\alpha_k < 0$ \emph{flips} their vote).
\item Only $p_k = 1/2$ exactly contributes $C_k = 0$ (pure noise).
\item If\; $\sum_{k=1}^\infty (2p_k-1)^2 = \infty$, then $\Pr(\delta_K \neq Y) \to 0$.
\end{enumerate}
\end{theorem}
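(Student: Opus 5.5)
The proof will use a Chernoff (Bhattacharyya) bound on the Bayes error. Since the priors are equal and votes are conditionally independent, the Bayes-optimal rule is the likelihood-ratio test $\delta_K(\mathbf{x})=\mathrm{sign}\{\sum_k \alpha_k x_k\}$. Indeed, $\log\frac{\Pr(x_k\mid Y=+1)}{\Pr(x_k\mid Y=-1)}=\alpha_k x_k$ with $\alpha_k=\log\frac{p_k}{1-p_k}$, because $\Pr(X_k=x\mid Y=y)=p_k$ if $x=y$ and $1-p_k$ otherwise. Tie-breaking at zero does not affect the Bayes risk.

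The first step is the standard identity for the Bayes error,
\[
\Pr(\delta_K\neq Y)=\sum_{\mathbf{x}}\min\bigl\{\tfrac12 P_+(\mathbf{x}),\tfrac12 P_-(\mathbf{x})\bigr\},
\]
where $P_\pm$ are the conditional laws of $\mathbf{X}$ given $Y=\pm1$. Using $\min\{a,b\}\le\sqrt{ab}$, this gives
\[
\Pr(\delta_K\neq Y)\le \tfrac12\sum_{\mathbf{x}}\sqrt{P_+(\mathbf{x})P_-(\mathbf{x})}.
\]
Conditional independence factorizes the Bhattacharyya coefficient into $\prod_k\sum_{x_k\in\{\pm1\}}\sqrt{P_+(x_k)P_-(x_k)}$. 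Each factor equals
\[
\sqrt{p_k(1-p_k)}+\sqrt{(1-p_k)p_k}=2\sqrt{p_k(1-p_k)}=\sqrt{1-(2p_k-1)^2}=e^{-C_k}.
\]
This yields the stated bound $\tfrac12\exp(-\sum_k C_k)$.

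Next come the elementary consequences. The inequality $C_k\ge\tfrac12(2p_k-1)^2$ follows from $-\log(1-u)\ge u$ with $u=(2p_k-1)^2\in[0,1)$. Items 1 and 2 follow because $C_k$ depends on $p_k$ only through $(2p_k-1)^2$, which is symmetric under $p_k\mapsto 1-p_k$ and positive unless $p_k=1/2$. We also note that $\alpha_k<0$ when $p_k<1/2$, so the weighted rule reverses such votes. For item 3, divergence of $\sum(2p_k-1)^2$ forces $\sum C_k=\infty$ via the lower bound, and hence the bound tends to $0$. Although the statement excludes $p_k=1/2$, including such judges would be harmless: their factor equals $1$, so the product is unchanged.

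The only step that needs care is justifying that the Bayes error equals the minimum-sum expression under arbitrary tie-breaking. At ties $P_+(\mathbf{x})=P_-(\mathbf{x})$, so either choice incurs the same loss $\tfrac12 P_+(\mathbf{x})$. I will state this explicitly. The remaining steps are routine computations.
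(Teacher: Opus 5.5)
Your proof is correct and follows the paper's argument: you identify the weighted vote as the likelihood-ratio test, factorize the Bhattacharyya coefficient into per-judge terms $2\sqrt{p_k(1-p_k)}=\sqrt{1-(2p_k-1)^2}$, apply the equal-prior Bhattacharyya bound, and use $-\log(1-u)\ge u$ to get the lower bound on $C_k$. The only difference is that you derive the Bhattacharyya bound from $\min\{a,b\}\le\sqrt{ab}$ and treat ties explicitly, whereas the paper simply cites the bound.
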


The proof is in Appendix~\ref{app:proof-calibrated-cjt}. A judge with $p_k = 0.3$ contributes exactly as much as one with $p_k = 0.7$:
both have $|2p_k-1|=0.4$, and the optimal weight flips the anti-expert's vote
rather than discarding it. Section~\ref{sec:family-clustering} tests this prediction under judge correlation via family deduplication.

%% file: sections/experiments.tex
\section{Experiments}
\label{sec:experiments}

We evaluate calibration quality of multi-judge aggregation on four benchmarks where ground-truth labels exist. All sections share the same datasets, judge panels, metrics, and split protocol. Section~\ref{sec:main-calibration} tests the main inclusion claim, Section~\ref{sec:stronger-baselines} rules out stronger pruning baselines, and Section~\ref{sec:cross-ablation} isolates aggregator and calibrator effects. Section~\ref{sec:scope-and-budget} then studies label budget and scope.
\subsection{Setup}
\label{sec:setup}

\textbf{Datasets.}
We evaluated four pairwise-comparison benchmarks with ground-truth labels: \emph{JudgeBench}~\citep{judgebench2025} ($350$ pairs with objectively correct answers, $32$ native LLM judges),
\emph{RewardBench}~\citep{lambert2024rewardbench} ($2985$ chosen-vs-rejected pairs, $100$ reward models),
\emph{RewardBench~2}~\citep{malik2025rewardbench2} ($1865$ harder pairs, $174$ reward models), and
\emph{LLMBar}~\citep{zeng2024llmbar} ($419$ instruction-following pairs including adversarial subsets, $44$ native pairwise evaluators).

\textbf{Judge panels.}
We augment each benchmark with API- and/or GPU-hosted LLM judges to widen backbone-family coverage, yielding $K{=}38$ (JudgeBench), $104$ (RewardBench), $174$ (RewardBench~2), and $53$ (LLMBar).
RewardBench~2 retains its native $174$ reward models without additional judges; GPU-judge stress tests for JudgeBench and RewardBench~2 are reported as scope diagnostics.
All added judges use a single Multi-Turn Bench (MT-bench)-style pairwise prompt~\citep{zheng2023judging}; A/B order is randomized per item, verdicts are extracted by a four-level fallback parser, and parse failures are recorded as missing and skipped by the aggregator.
Per-judge identities, panel composition, prompts, parser, decoding settings, and verdict coverage are in Appendix~\ref{app:judge-list} (Table~\ref{tab:judge-panel}).

\textbf{Methods.}
We compare uncalibrated aggregators (majority vote, accuracy-weighted, log-odds weighted, Dawid--Skene EM, Bayesian one-coin, and top-$k$ + one-coin), their post-hoc calibrated variants (Platt scaling, temperature scaling, beta calibration, and isotonic regression), and split-conformal wrappers at 90\% and 80\% target coverage.\footnote{Each calibrator is applied to every aggregation method; when summarizing a calibrator family, we report the best aggregation--calibration combination for that calibrator.}

\textbf{Metrics.}
Our headline metrics are NLL and Brier score, both strictly proper scoring rules~\citep{gneiting2007strictly} that directly measure calibration quality without binning:
\begin{align*}
\mathrm{NLL} &= -\frac{1}{N}\sum_t \left[z_t \log \hat{p}_t + (1-z_t)\log(1-\hat{p}_t)\right], &
\mathrm{Brier} &= \frac{1}{N}\sum_t (\hat{p}_t-z_t)^2.
\end{align*}
We also report expected calibration error (ECE), point accuracy, conformal coverage, and prediction-set size:
\begin{align*}
\mathrm{ECE} &= \sum_{b=1}^{10}\frac{|I_b|}{N}\left|\mathrm{acc}(I_b)-\mathrm{conf}(I_b)\right|, &
\mathrm{Acc} &= \frac{1}{N}\sum_t \mathbf{1}\{\mathbf{1}(\hat{p}_t\ge0.5)=z_t\}, \\
\mathrm{Coverage} &= \frac{1}{N}\sum_t \mathbf{1}\{z_t\in\mathcal{C}(x_t)\}, &
\mathrm{Size} &= \frac{1}{N}\sum_t |\mathcal{C}(x_t)|.
\end{align*}
For ECE, we use 10 equal-width confidence bins with $\mathrm{conf}_t=\max\{\hat{p}_t,1-\hat{p}_t\}$ after thresholding at $0.5$; rankings are unchanged under the scaling-binning estimator of~\citet{kumar2019verified}.
Lower NLL, Brier, ECE, and set size are better; higher accuracy and coverage are better.
Unless otherwise stated, every experiment uses a 50/50 calibration/evaluation split repeated over 100 independent random permutations, fits calibration methods only on the calibration split, and clips probabilities to $[0.001,0.999]$ for numerical stability.
Statistical tests are reported with each comparison.

\subsection{Main Result: Calibrated Inclusion Beats Selection}
\label{sec:main-calibration}
\label{sec:selection-calibration}

We first separate raw aggregation quality from calibrated probability quality.
Without post-hoc calibration, pruning weak judges can improve raw accuracy and sometimes raw NLL by removing extreme low-quality signals (Table~\ref{tab:uncalibrated}; Appendix~\ref{app:uncalibrated-selection}). After applying the same beta calibrator, the ranking reverses: using all judges achieves the lowest NLL in the primary Top-3 and Top-5 comparisons on all four datasets, with one Top-10 boundary case on LLMBar (Figure~\ref{fig:inclusion-vs-selection}a).
Concretely, Top-5 selection increases calibrated NLL relative to all judges by $+19\%$ on JudgeBench ($0.019$ vs.\ $0.016$), $+31\%$ on RewardBench ($0.007$ vs.\ $0.006$), $+123\%$ on RewardBench~2 ($0.013$ vs.\ $0.006$), and $+22\%$ on LLMBar ($0.014$ vs.\ $0.012$).

\begin{table}[t]
\centering
\caption{Uncalibrated results on JudgeBench (left; 38 judges, 350 pairs) and RewardBench (right; 104 judges, 2985 pairs). All rows use Bayesian one-coin \emph{without} post-hoc beta calibration.}
\label{tab:uncalibrated}
\footnotesize
\setlength{\tabcolsep}{4pt}
\begin{minipage}[t]{0.50\textwidth}
\centering
\begin{tabular}{@{}lcccc@{}}
\toprule
Method & NLL & Brier & ECE & Acc\,(\%) \\
\midrule
Majority Vote & 0.568 & 0.193 & 0.372 & 70.4 \\
Accuracy-Weighted & 0.557 & 0.188 & 0.365 & 70.7 \\
Temperature Scaling & 0.570 & 0.193 & 0.385 & 70.4 \\
Dawid-Skene EM & 1.682 & 0.278 & 0.294 & 70.9 \\
Log-Odds Weighted & 1.231 & 0.231 & 0.261 & 73.9 \\
Bayesian One-Coin (all) & 1.224 & 0.229 & 0.261 & 73.9 \\
\midrule
Top-3 + Bayesian & \textbf{0.501} & \textbf{0.148} & 0.233 & \textbf{80.3} \\
Top-5 + Bayesian & 0.589 & 0.159 & \textbf{0.224} & 79.3 \\
Top-10 + Bayesian & 0.801 & 0.182 & 0.226 & 78.5 \\
\bottomrule
\end{tabular}
\end{minipage}%
\hfill
\begin{minipage}[t]{0.50\textwidth}
\centering
\begin{tabular}{@{}lcccc@{}}
\toprule
Method & NLL & Brier & ECE & Acc\,(\%) \\
\midrule
Majority Vote & 0.348 & 0.099 & 0.269 & 90.2 \\
Accuracy-Weighted & 0.314 & 0.086 & 0.244 & 91.0 \\
Temperature Scaling & 0.263 & 0.077 & 0.152 & 90.2 \\
Dawid-Skene EM & 0.470 & 0.071 & 0.074 & 92.7 \\
Log-Odds Weighted & 0.415 & 0.063 & 0.065 & 93.6 \\
Bayesian One-Coin (all) & 0.411 & 0.062 & 0.065 & 93.6 \\
\midrule
Top-3 + Bayesian & \textbf{0.200} & 0.040 & 0.044 & 95.7 \\
Top-5 + Bayesian & 0.214 & \textbf{0.039} & \textbf{0.042} & \textbf{95.9} \\
Top-10 + Bayesian & 0.257 & 0.041 & 0.044 & 95.8 \\
\bottomrule
\end{tabular}
\end{minipage}
\end{table}

\begin{figure}[t]
\centering
\includegraphics[width=\textwidth]{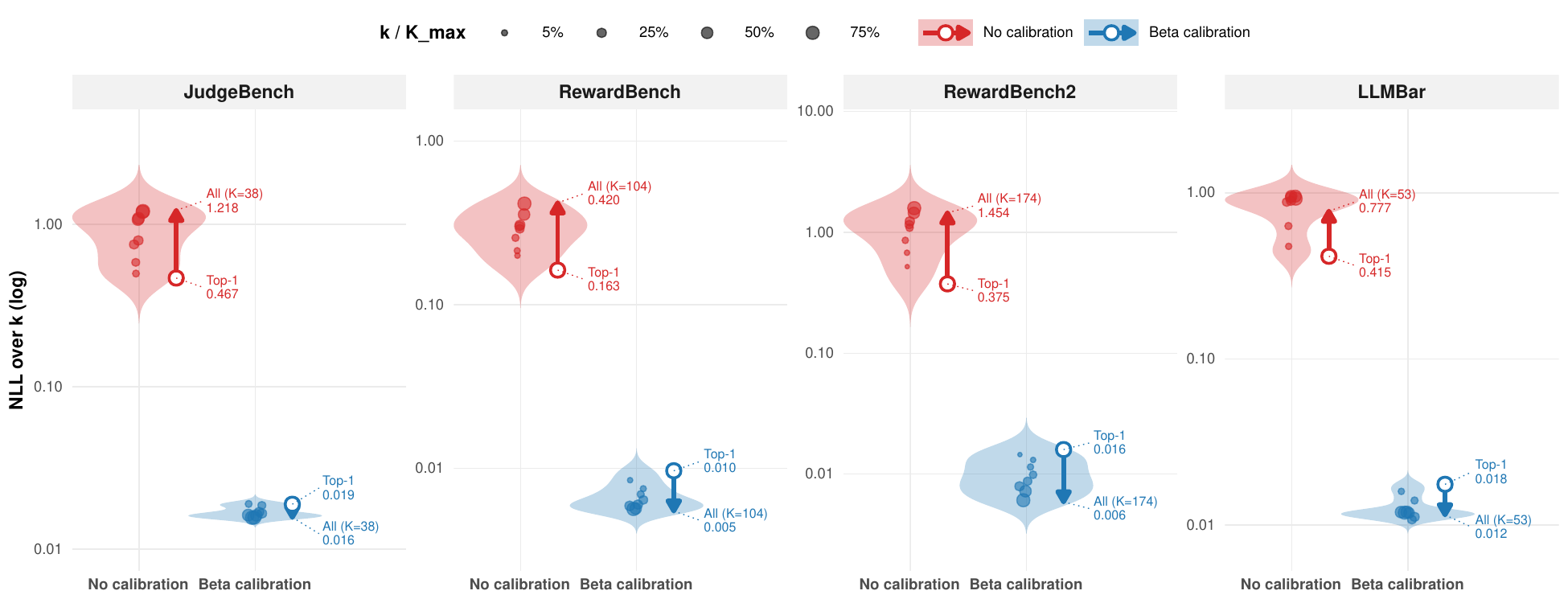}
\caption{Calibration flips the selection--inclusion tradeoff. 
Across datasets, top-\(k\) pruning can reduce uncalibrated NLL, but after beta calibration the all-judge panel achieves the lowest NLL. }
\label{fig:inclusion-vs-selection}
\end{figure}

Selection improves point accuracy by removing weak judges, but calibration benefits from the disagreement they contribute: disagreement marks difficult items and produces less extreme probabilities. In practice, inclusion should be the default unless the scope diagnostics in Section~\ref{par:scope-failures} are triggered. The factor of variability is the random calibration/evaluation split. Our primary effect size is the mean per-item NLL difference $\Delta = \mathrm{NLL}_{\mathrm{top}k} - \mathrm{NLL}_{\mathrm{all}}$ with bootstrap $95\%$ confidence intervals (CIs) over splits, and our primary test is the item-level paired permutation test with 10{,}000 random sign flips within each split. We report the median test statistic and median $p$-value across splits, with inference details in Appendix~\ref{app:headline-inference}; the simultaneous-across-budgets claim (Sec.~\ref{sec:scope-and-budget}) uses a separate Gaussian multiplier bootstrap (Appendix~\ref{app:joint-test}).
Table~\ref{tab:topk-inference} summarizes results across datasets and top-$k$ values.

\begin{table}[t]
\centering
\caption{Paired-permutation evidence for calibrated inclusion over top-$k$ selection.}
\label{tab:topk-inference}
\small
\begin{tabular}{@{}l ccc@{}}
\toprule
& Top-3 & Top-5 & Top-10 \\
\midrule
JudgeBench & $\Delta{=}0.003$, $t{=}11.7^{***}$ & $\Delta{=}0.003$, $t{=}10.8^{***}$ & $\Delta{=}0.001$, $t{=}4.0^{***}$ \\
RewardBench & $\Delta{=}0.003$, $t{=}27.2^{***}$ & $\Delta{=}0.002$, $t{=}18.3^{***}$ & $\Delta{=}0.001$, $t{=}12.9^{***}$ \\
RewardBench~2 & $\Delta{=}0.009$, $t{=}71.6^{***}$ & $\Delta{=}0.007$, $t{=}57.8^{***}$ & $\Delta{=}0.006$, $t{=}44.4^{***}$ \\
LLMBar & $\Delta{=}0.004$, $t{=}16.6^{***}$ & $\Delta{=}0.003$, $t{=}9.8^{***}$ & $\Delta{=}{-}0.001$, $t{=}{-}2.7^{\text{ns}}$ \\
\bottomrule
\end{tabular}\\[2pt]
\parbox{0.95\linewidth}{\footnotesize $\Delta$: mean per-item NLL(Top-$k$) $-$ NLL(All) (median across 100 splits); $t$: median item-level paired permutation $t$-statistic. $^{***}$: $p < 0.001$; $^{\text{ns}}$: not significant under 10K sign flips.}
\end{table}

Bootstrap 95\% CIs for $\Delta$ exclude zero on every comparison except LLMBar Top-10, supporting Proposition~\ref{prop:monotone-risk}: oracle monotonicity holds in finite samples.\label{sec:family-clustering} Proposition~\ref{prop:monotone-risk} and Theorem~\ref{thm:calibrated-cjt} assume conditional independence, but in practice same-family judges are correlated.
We cluster JudgeBench's 38 judges into 9 families (OpenAI, Anthropic, Meta, Google, Alibaba, Mistral, JudgeLM, Prometheus, RewardModel) and select one representative per family by calibration-set accuracy.
All judges ($K{=}38$, NLL $= 0.0157$) outperform one-per-family ($K{=}9$, NLL $= 0.0185$; $\Delta = 0.003$, $t = 10.0$, $p < 0.001$).
Inclusion beats selection even after removing within-family redundancy, supporting Theorem~\ref{thm:calibrated-cjt}'s prediction that cross-family judges carry conditional information.

\subsection{Stronger Curators Tie Inclusion at Best}
\label{sec:stronger-baselines}

Top-$k$ by calibration-set accuracy is a simple pruning rule, so we test whether stronger pruning-style curators can beat all-judges calibration.
We compare against greedy forward selection on inner-split calibrated NLL and $\ell_1$-sparse logistic regression on the raw vote matrix; oracle top-$5$ and random-$5$ serve as sanity checks for accuracy-ranked pruning, not as calibration-aware baselines. Full curator definitions, regularization choices, and subset-search protocol details are in Appendix~\ref{app:curator-details}.

\begin{figure}[t]
\centering
\includegraphics[width=\textwidth]{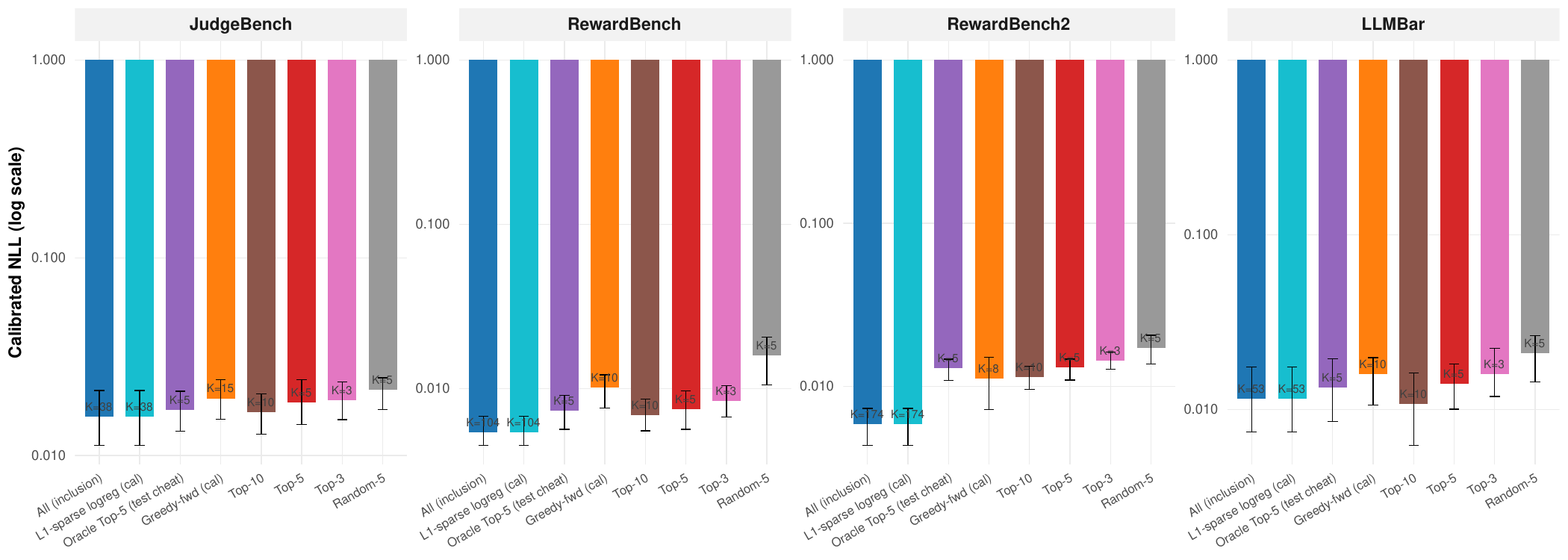}
\caption{Calibrated NLL for inclusion and curation baselines (lower is better).}
\label{fig:stronger-baselines}
\end{figure}

Inclusion beats greedy forward and oracle top-$5$ on all four datasets (all $p < 0.001$): JudgeBench ($\Delta{=}+0.0037$, $+0.0014$), RewardBench ($+0.0045$, $+0.0017$), RewardBench~2 ($+0.0052$, $+0.0071$), LLMBar ($+0.0043$, $+0.0018$).
Oracle top-5 confirms perfect accuracy ranking cannot rescue pruning at $k{=}5$; greedy forward overfits on the larger benchmarks; and $\ell_1$ logistic regression does not sparsify at any regularization strength, matching the full panel.
The only boundary remains LLMBar Top-10 ($\Delta = -0.0007$, $p = 0.033$), where two below-chance GPU judges (41\%, 45\%) approximate the $p \approx 1/2$ edge case of Theorem~1. Even there, neither greedy forward nor oracle top-5 beats inclusion.

A stricter test refits the aggregator+calibrator pipeline per candidate subset and selects by held-out calibrated NLL.\label{sec:same-pipeline-search}
We enumerate all $2^9{-}1=511$ subsets on the family-deduplicated JudgeBench panel, and use forward beam search (width $3$, up to $20$ steps) plus backward greedy elimination on the full panels.
Table~\ref{tab:same-pipeline} reports the chosen subset's test NLL; ``inclusion'' means non-overlapping $95\%$ CIs favor all judges, while ``tie'' means the CIs overlap.
Beam search loses to inclusion on every full panel ($\Delta \in [+0.0010,+0.0040]$).
Backward greedy ties inclusion but retains most judges on average ($K_{\mathrm{sel}}/K=26.5/38,\; 88.2/104,\; 158.1/174,\; 36.2/53$), making all-judges the simpler default.

\begin{table}[t]
\centering
\caption{Same-pipeline subset search versus all-judges Bayes+Beta. Forward beam loses to inclusion on every full panel; backward greedy and exhaustive family-dedup search tie inclusion.}
\label{tab:same-pipeline}
\small
\begin{tabular}{@{}l rr rr l@{}}
\toprule
Panel & $K$ & Search mode & Inclusion NLL & Search NLL & Winner \\
\midrule
JB-dedup (family) & $9$   & exhaustive       & $0.0188$ & $0.0189 \; (K_{\mathrm{sel}}{=}4.5)$   & tie \\
JudgeBench        & $38$  & beam (width 3)   & $0.0157$ & $0.0167 \; (K_{\mathrm{sel}}{=}16.2)$  & inclusion \\
JudgeBench        & $38$  & backward greedy  & $0.0157$ & $0.0157 \; (K_{\mathrm{sel}}{=}26.5)$  & tie \\
RewardBench       & $104$ & beam (width 3)   & $0.0057$ & $0.0095 \; (K_{\mathrm{sel}}{=}17.1)$  & inclusion \\
RewardBench       & $104$ & backward greedy  & $0.0057$ & $0.0055 \; (K_{\mathrm{sel}}{=}88.2)$  & tie \\
RewardBench~2     & $174$ & beam (width 3)   & $0.0057$ & $0.0107 \; (K_{\mathrm{sel}}{=}18.9)$  & inclusion \\
RewardBench~2     & $174$ & backward greedy  & $0.0057$ & $0.0058 \; (K_{\mathrm{sel}}{=}158.1)$ & tie \\
LLMBar            & $53$  & beam (width 3)   & $0.0115$ & $0.0124 \; (K_{\mathrm{sel}}{=}17.3)$  & inclusion \\
LLMBar            & $53$  & backward greedy  & $0.0115$ & $0.0114 \; (K_{\mathrm{sel}}{=}36.2)$  & tie \\
\bottomrule
\end{tabular}
\vspace{2pt}
\parbox{0.95\linewidth}{\footnotesize Winner: ``inclusion'' means non-overlapping $95\%$ CIs favor all judges; ``tie'' means the CIs overlap.}
\end{table}

\subsection{Calibrator Capacity Dominates Aggregator}
\label{sec:cross-ablation}
The previous comparisons use a fixed calibrated pipeline; we now isolate which part drives the NLL gains by running a full $\{\text{aggregator}\} \times \{\text{calibrator}\}$ grid with strict nested separation (70\% of calibration data for aggregator parameters, 30\% for calibrator parameters). Throughout the paper we use beta calibration with light elastic-net regularization ($\lambda{=}0.01$). Beta calibration brings every aggregator into the low-NLL regime ($\approx0.005$--$0.023$), whereas uncalibrated methods vary by two orders of magnitude ($0.20$--$1.88$). Aggregation is still useful: heterogeneity-aware methods beat vote fraction after calibration, but the main failure mode is raw-score miscalibration. Under the nested protocol, Platt scaling often collapses to the identity because its two parameters underfit the small calibrator sub-split; the full calibrator-design ablation is in Appendix~\ref{sec:calibrator-comparison}.

\begin{table}[t]
\centering
\caption{Aggregator $\times$ calibrator NLL under the nested 70/30 protocol (100 splits).}
\label{tab:cross-ablation}
\small
\begin{tabular}{@{}l ccc ccc@{}}
\toprule
& \multicolumn{3}{c}{No Calibration} & \multicolumn{3}{c}{Beta ($\lambda{=}0.01$)} \\
\cmidrule(lr){2-4} \cmidrule(lr){5-7}
Aggregator & JB & RB & RB2 & JB & RB & RB2 \\
\midrule
Majority Vote   & $0.568$ & $0.348$ & $0.629$ & $0.023$ & $0.020$ & $0.021$ \\
Accuracy-Wt     & $0.557$ & $0.313$ & $0.596$ & $0.023$ & $0.020$ & $0.020$ \\
Log-Odds        & $1.231$ & $0.415$ & $1.627$ & $0.016$ & $\mathbf{0.006}$ & $0.006$ \\
Dawid-Skene EM  & $1.682$ & $0.470$ & $1.876$ & $\mathbf{0.014}$ & $0.006$ & $\mathbf{0.005}$ \\
Bayes One-Coin  & $1.218$ & $0.411$ & $1.454$ & $0.016$ & $0.006$ & $0.006$ \\
Top-3 + Bayes   & $0.497$ & $0.200$ & $0.520$ & $0.019$ & $0.008$ & $0.014$ \\
Top-5 + Bayes   & $0.583$ & $0.214$ & $0.679$ & $0.019$ & $0.007$ & $0.013$ \\
Top-10 + Bayes  & $0.795$ & $0.257$ & $0.861$ & $0.017$ & $0.007$ & $0.011$ \\
\bottomrule
\end{tabular}
\end{table}

\subsection{Label Budget and Scope Diagnostics}
\label{sec:scope-and-budget}

We close the experiments by making the inclusion recommendation operational: how many calibration labels are needed, and when should added judges be filtered or handled by a more robust aggregator?

A natural concern is that inclusion may only beat selection when calibration labels are abundant. We sweep $n_\mathrm{cal}$ from $20$ to the full dataset (Figure~\ref{fig:label-budget}). Across all tested datasets and budgets, inclusion dominates selection, with the largest gains at small budgets: on RewardBench, all-judges reaches NLL $0.01$ with $75$ labels while top-$5$ needs more than $300$. Because this is a simultaneous across-budget claim, we test the curve with a Gaussian multiplier bootstrap~\citep{cck2013}; the observed max-$t$ statistic exceeds the $99.9\%$ null quantile by $4.1\times$, $6.1\times$, $23\times$, and $3.1\times$ on JudgeBench, RewardBench, RewardBench~2, and LLMBar, respectively (joint $p<10^{-4}$ on all four datasets; Appendix~\ref{app:joint-test}). The sharp drop near $n_\mathrm{cal}{\approx}30$ is calibrator-specific: Platt and temperature scaling lack this transition (Appendix~\ref{app:matched-cal}), so the effect reflects beta-calibrator identifiability, not a universal sample-size threshold.

\begin{figure}[t]
\centering
\includegraphics[width=\textwidth]{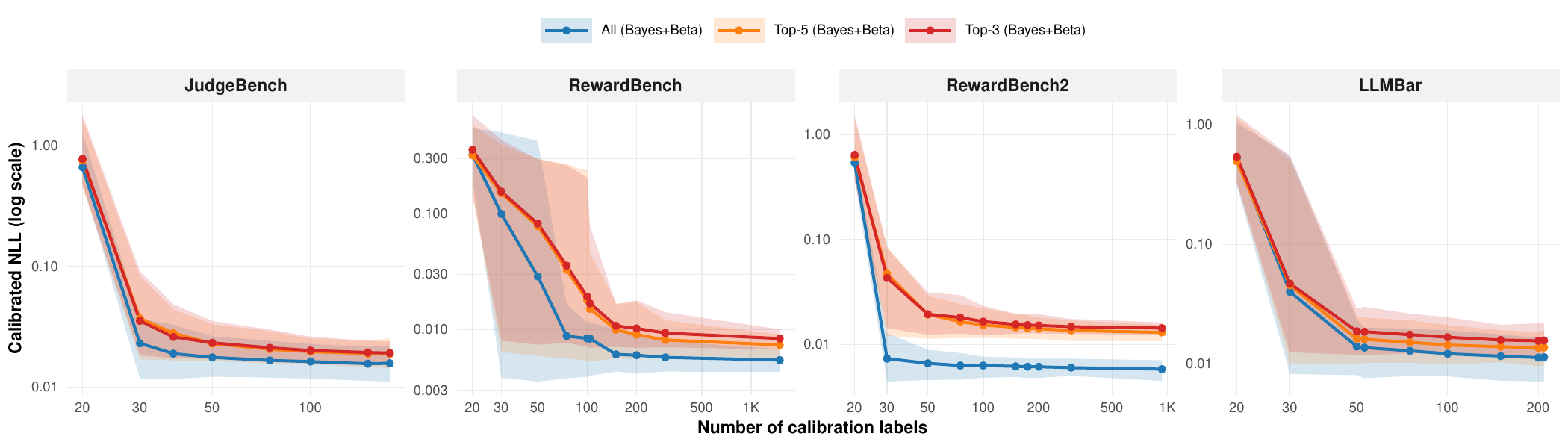}
\caption{Calibrated NLL vs.\ calibration-label budget; bands are 95\% CIs.}
\label{fig:label-budget}
\end{figure}

Inclusion is the default only when added judges provide calibratable signal.
We use three calibration-set diagnostics:
(P1)~\emph{parseability}, verdict parse rate ${\geq}90\%$ per judge;
(P2)~\emph{non-saturation}, calibrated NLL remains above the sharpness floor $\epsilon^\star\!\approx\!0.005$;
and (P3)~\emph{aggregator compatibility}, missingness pattern matches the aggregator.
These are conservative operating checks, all evaluated on the calibration set only.
JudgeBench and RewardBench satisfy all three.
LLMBar triggers P1 for one low-coverage GPU judge and has two below-chance judges, which explains the Top-$10$ boundary; Top-$3$/Top-$5$ comparisons remain intact.
RewardBench~2 under Bayesian one-coin violates P2/P3: $174$ related reward models already drive NLL near the floor, and four partially missing GPU judges perturb the aggregator, whereas Dawid--Skene absorbs them. Additional scope checks are reported in Appendix~\ref{app:scope-diagnostics}.



%% file: sections/related_work.tex
\section{Related Work}
\label{sec:related}

Classical multi-annotator aggregation begins with Dawid--Skene EM~\citep{dawid1979maximum}, which estimates annotator reliability from disagreement, and later crowdsourcing models extend this idea to item difficulty, labeller quality, and noisy pairwise rankings~\citep{whitehill2009whose,raykar2010learning,chen2013crowd}. Recent LLM-judge work studies richer unsupervised aggregation and uncertainty structure: \citet{snorkel2025frommanyvoices} propose \textsc{Care}, which models judge bias and correlation without labels, while \citet{wang2025noise} decompose evaluation noise and show that prediction noise is often dominant. Related Bradley--Terry variants model judge-specific discrimination or rater quality~\citep{xu2026judge,aczel2025bbq,btsigma2026}. We work in the labeled-calibration regime instead: the question is not only how to aggregate heterogeneous judges, but whether a limited labeled budget should calibrate all judges or curate a smaller panel.

A parallel line studies label-efficient evaluation with small amounts of ground truth. Prediction-Powered Ranking~\citep{chatzi2024prediction} uses a calibration set to form confidence regions for model rankings, \textsc{Precise}~\citep{divekar2026precise} reduces annotation cost with sub-instance estimates, \citet{dorner2025llm} characterize when LLM-judge assistance can or cannot reduce the required human-label budget, and Trust-or-Escalate~\citep{jung2025trust} uses calibrated confidence to decide when cheap judges should defer to expensive ones. Our label-budget question is complementary: given many imperfect judges and a fixed labeled calibration split, we find that spending the budget on post-hoc calibration of the full panel is usually better than accuracy-ranked pruning.

Our method uses standard post-hoc calibration and coverage tools rather than introducing a new calibrator. Step~3 builds on Platt scaling~\citep{platt1999probabilistic}, beta calibration~\citep{kull2017beta}, and the modern calibration literature~\citep{guo2017calibration}; NLL is a strictly proper scoring rule~\citep{gneiting2007strictly}. Step~4 optionally wraps the calibrated probabilities with split conformal prediction~\citep{tibshirani2019conformal}. A fuller review of crowdsourcing, Bradley--Terry variants, single-judge calibration, and conformal extensions is in Appendix~\ref{app:additional-related}.

%% file: sections/conclusion.tex
\section{Discussion and Conclusion}
\label{sec:conclusion}

We presented a theory-guided empirical study of calibrated multi-judge LLM evaluation. Monotonicity of oracle-calibrated risk (Proposition~\ref{prop:monotone-risk}) and the Calibrated Jury Theorem (Theorem~\ref{thm:calibrated-cjt}) predict that retaining judge signals should improve calibrated probability estimation even when some judges are weak or redundant. Across four labeled benchmarks, this prediction largely holds: the inclusion advantage is significant for the primary Top-3 and Top-5 comparisons on every dataset (Table~\ref{tab:topk-inference}), survives judge-family deduplication, and remains competitive against stronger subset search. In short: on labeled benchmarks with $n_\mathrm{cal} \ge 30$, accuracy-ranked pruning is dominated by all-judges calibration, unless a judge is pathological or the panel is already saturated.

The practical recommendation is to start with all judges, a heterogeneity-aware aggregator, and regularized beta calibration; then use calibration-set diagnostics to decide whether this default is safe. If added judges are mostly unparseable or reliably below chance, filter them; if a large panel is already at the calibrated NLL floor and new partially missing judges perturb Bayesian one-coin, switch to a more robust aggregator such as Dawid--Skene. These operating details, including the concrete thresholds and boundary cases, are collected in Appendix~\ref{app:scope-diagnostics}. The scope is deliberately limited: we study labeled benchmarks with held-out calibration labels, not open-ended evaluation without ground truth; the one-coin theory idealizes judge dependence; and the IRT variance decomposition is a diagnostic model rather than a complete account of evaluation uncertainty.

Calibrated evaluation uncertainty can improve deployment practice by identifying when model comparisons should not be trusted. It also shifts the budget tradeoff: in labeled evaluation regimes, retaining a diverse judge panel and investing in calibration data can be more valuable than curating a small panel of high-accuracy judges.

%% file: sections/appendix.tex
\clearpage
\makeatletter
\newcommand*{\l@appsection}{\@dottedtocline{1}{0em}{2.6em}}
\newcommand*{\l@appsubsection}{\@dottedtocline{2}{1.8em}{3.6em}}
\begingroup
\section*{LIST OF APPENDICES}
{\small
\let\l@section\l@appsection
\let\l@subsection\l@appsubsection
\@starttoc{app}
}
\endgroup
\clearpage

\let\origappsection\section
\let\origappsubsection\subsection
\renewcommand{\section}[1]{\origappsection{#1}\addcontentsline{app}{section}{\protect\numberline{\thesection}#1}}
\renewcommand{\subsection}[1]{\origappsubsection{#1}\addcontentsline{app}{subsection}{\protect\numberline{\thesubsection}#1}}
\makeatother

\section{Notation}
\label{app:notation}

Table~\ref{tab:notation} lists every symbol used in the main text, theory section, and experiments.
We use four kinds of probability function and keep them strictly distinct:
\begin{itemize}[leftmargin=*,itemsep=2pt]
    \item $\mathbb{P}(\cdot)$ — probability under the (unknown) data-generating distribution. Used to define what \emph{calibration} means: $\mathbb{P}(z_t = 1 \mid \hat{p}_t = p) \approx p$.
    \item $\hat{P}(\cdot)$ and $\hat{p}_t$ — the model's predicted probability (function form, and value at item $t$). These are random variables that depend on the calibration sample.
    \item $P(\cdot)$ — probability under a specified parametric model (e.g., the one-coin Bayesian posterior of Section~\ref{sec:step1}, where $P(z_t \mid \mathbf{y}_t)$ refers to the posterior under known judge-accuracy priors).
    \item $\Pr(\cdot)$ — probability under the oracle generative model used in Section~\ref{sec:theory} (one-coin model with deterministic accuracies $p_k$). Reserved for theory statements where the data-generating mechanism is fully specified.
\end{itemize}
Hat ($\hat{\cdot}$) marks model estimates throughout; subscripts $t, k, i, j$ index items and judges.

{\small
\setlength{\tabcolsep}{4pt}
\renewcommand{\arraystretch}{1.15}
\begin{longtable}{@{}p{3.4cm}p{10.1cm}@{}}
\caption{Notation used throughout the paper. Probability symbols distinguish data-generating, model-predicted, parametric, and oracle quantities.}
\label{tab:notation} \\
\toprule
\textbf{Symbol} & \textbf{Meaning} \\
\midrule
\endfirsthead

\multicolumn{2}{@{}l}{\emph{(Table~\ref{tab:notation} continued from previous page)}} \\
\toprule
\textbf{Symbol} & \textbf{Meaning} \\
\midrule
\endhead

\midrule
\multicolumn{2}{r@{}}{\emph{(continued on next page)}} \\
\endfoot

\bottomrule
\endlastfoot

\multicolumn{2}{@{}l}{\emph{Probability functions}} \\
$\mathbb{P}(\cdot)$ & Probability under the data-generating distribution (used for the calibration condition) \\
$\hat{P}(\cdot)$, $\hat{p}_t$ & Model's predicted probability (function form, and scalar value at item $t$); $\hat{p}_t \in [0,1]$ \\
$\hat{p}_t^{\mathrm{corr}}$, $\hat{p}_t^{\mathrm{final}}$ & Step-2 (residual-corrected) and Step-3 (final-calibrated) probabilities \\
$P(\cdot)$ & Probability under a specified parametric model (one-coin Bayesian posterior, Section~\ref{sec:step1}) \\
$\Pr(\cdot)$ & Probability under the oracle generative model (Section~\ref{sec:theory}) \\
$p$ & Generic probability value in $[0,1]$ \\
$\pi$ & Class prior $\Pr(Y = 1)$ \\
$\sigma(x)$ & Logistic sigmoid $1/(1+e^{-x})$ \\
\midrule
\multicolumn{2}{@{}l}{\emph{Indices and counts}} \\
$K$, $N$ & Number of judges; number of pairwise items \\
$k \in [K]$, $t \in [N]$ & Judge index; item index \\
$i, j$ & Item / judge index in IRT (Section~\ref{sec:decomposition}) \\
$J_t \subseteq [K]$ & Set of judges with a non-missing verdict on item $t$ \\
$S, T \subseteq [K]$ & Judge subsets (theory); $S \subseteq T$ in monotonicity arguments \\
$n_k$, $c_k$ & Calibration items judge $k$ votes on; judge $k$'s correct count \\
$n_{\mathrm{cal}}$ & Calibration set size (held-out labeled subset) \\
\midrule
\multicolumn{2}{@{}l}{\emph{Observed and latent variables}} \\
$z_t \in \{0,1\}$ & Ground-truth label for item $t$ \\
$y_{tk} \in \{0,1,-1\}$ & Judge $k$'s verdict on item $t$; $-1$ denotes missing \\
$Y \in \{0,1\}$ & Random ground truth (theory) \\
$X_k \in \{-1,+1\}$ & Judge $k$'s vote in $\pm 1$ encoding (theory) \\
$\mathbf{x}_t$, $\mathbf{y}_t$ & Item-feature vector; vector of judge verdicts on item $t$ \\
$\mathbf{X}_S$ & Vector of judge votes restricted to subset $S$ \\
\midrule
\multicolumn{2}{@{}l}{\emph{Judge-reliability parameters}} \\
$a_k$ & Judge $k$'s accuracy (Bayesian r.v.\ in $\mathcal{M}_\omega$) \\
$p_k$ & Judge $k$'s accuracy (deterministic; theory) \\
$\hat{a}_k$ & Plug-in point estimate of $a_k$ \\
$\delta_k$ & Excess accuracy $p_k - 1/2$ (Theorem~\ref{thm:finite}) \\
$\alpha_k, \beta_k$ & Beta-posterior parameters: $\alpha_k = c_k + 1$, $\beta_k = (n_k - c_k) + 1$ \\
$\ell_k$ & Judge $k$'s log-odds contribution under the integrated posterior \\
$w_k$, $\alpha_k^{\mathrm{th}}$ & Aggregator weight on judge $k$; optimal $\alpha_k^{\mathrm{th}} = \log\!\bigl(p_k/(1-p_k)\bigr)$ \\
\midrule
\multicolumn{2}{@{}l}{\emph{Calibration maps}} \\
$(a, b, c)$ & Calibrator parameters (Platt: $a, b$; beta calibration: $a, b, c$) \\
$g(p)$, $G(\cdot)$ & Beta-calibration map $g(p) = \sigma(a \log p + b \log(1-p) + c)$; generic distribution-calibration map \\
$\boldsymbol{\gamma}$ & Coefficients on item features in residual correction \\
$\lambda$, $\eta$ & Elastic-net regularization strength; clipping floor on $\hat{p}$ (Lemma~\ref{lem:btl-bounded}) \\
$\mathcal{M}_\omega \!\subset\! \mathcal{M}_{\omega,\delta} \!\subset\! \mathcal{M}_{\omega,\delta,G}$ & Nested model spaces (Bayesian aggregation $\subset$ + residual correction $\subset$ + distribution calibration) \\
\midrule
\multicolumn{2}{@{}l}{\emph{Conformal coverage (Step 4)}} \\
$s_t$ & Nonconformity score $1 - \hat{p}_t^{\mathrm{final}}(z_t)$ \\
$1 - \alpha$ & Target coverage level \\
$\hat{q}_{1-\alpha}$ & Empirical $(1-\alpha)$-quantile of calibration nonconformity scores \\
$\mathcal{C}(x_t)$ & Conformal prediction set at level $1-\alpha$ \\
\midrule
\multicolumn{2}{@{}l}{\emph{IRT variance decomposition}} \\
$\theta_j$, $\beta_i$, $\alpha_j$ & Judge $j$ ability; item $i$ difficulty; discrimination (2PL IRT) \\
\midrule
\multicolumn{2}{@{}l}{\emph{Risk and information theory}} \\
$L(z, \hat{p})$ & Bernoulli log-loss $-[z\log\hat{p} + (1-z)\log(1-\hat{p})]$ \\
$R_{\mathrm{NLL}}(S)$, $R_{\mathrm{Brier}}(S)$ & Calibrated NLL / Brier risk under judge subset $S$ \\
$H(Y \mid \mathbf{X}_S)$ & Conditional Shannon entropy \\
$I(Y; \mathbf{X}_T \mid \mathbf{X}_S)$ & Conditional mutual information \\
$\eta_S(\mathbf{x}_S)$ & Oracle-calibrated predictor $\Pr(Y=1 \mid \mathbf{X}_S = \mathbf{x}_S)$ \\
$\delta_K(\mathbf{x})$ & Bayes-optimal decision rule (Theorem~\ref{thm:calibrated-cjt}) \\
$C_k$ & Per-judge contribution to the Calibrated Jury bound: $-\tfrac{1}{2}\log(1-(2p_k-1)^2)$ \\
\midrule
\multicolumn{2}{@{}l}{\emph{Joint inference (Appendix~\ref{app:joint-test})}} \\
$\mathcal{T}=\{\tau_1,\ldots,\tau_m\}$ & Grid of label budgets tested simultaneously \\
$T_b(\tau)$, $\bar T(\tau)$ & Per-split NLL difference at budget $\tau$ in split $b$; mean across splits \\
$\widehat{\mathrm{Corr}}$ & Empirical correlation of centered $T_b$ across budgets \\
$M_{\mathrm{obs}}$, $c_{1-\alpha}$ & Observed test statistic; bootstrap critical value at level $1-\alpha$ \\
\end{longtable}
}

\section{Additional Related Literature}
\label{app:additional-related}

\paragraph{Classical crowdsourcing and ranking aggregation.}
Dawid--Skene EM~\citep{dawid1979maximum} is the classical latent-label model for estimating annotator reliability from repeated noisy labels, and it motivates the reliability-weighted view used throughout this paper.
Raykar et al.'s probabilistic labeller model~\citep{raykar2010learning} and Whitehill et al.'s quality-weighted voting~\citep{whitehill2009whose} extend this idea by jointly modeling item difficulty and annotator competence.
For pairwise settings, \citet{chen2013crowd} study ranking aggregation from noisy comparisons, connecting crowdsourcing reliability to preference-learning models.
These works establish that disagreement is signal rather than nuisance, but they do not study the post-hoc calibrated all-versus-top-$k$ inclusion question under a held-out labeled calibration budget.

\paragraph{Bradley--Terry and heterogeneous-judge variants.}
Several recent LLM-as-a-judge papers modify Bradley--Terry-style models to account for judge heterogeneity.
\citet{xu2026judge} introduce a judge-specific discriminator parameter and prove consistency for the resulting estimator.
\citet{aczel2025bbq} use Bayesian Bradley--Terry modeling with rater-quality priors, while Bradley--Terry-$\sigma$ (BT-$\sigma$)~\citep{btsigma2026} shows how ignoring heterogeneity can make confidence intervals too narrow around biased estimates.
These methods are closest to our aggregation layer, but our main empirical comparison asks whether reliability-aware aggregation should be applied to the full panel or to an accuracy-pruned subset after calibration.

\paragraph{Single-judge calibration and judge confidence.}
LLM judges are often overconfident~\citep{tian2025overconfidence}, motivating a growing line of single-judge calibration methods.
\citet{radharapu2025probes} train linear probes on hidden states with Brier-score supervision; \citet{manggala2024qacal} propose QA-calibration across question--answer subgroups; \citet{dadkhahi2025btd} calibrate Bradley--Terry--Davidson samples; \citet{quantllmjudges2025} align judge scores to human preferences with lightweight generalized linear model (GLM) / Bradley--Terry--Luce (BTL) regressors; and \citet{chen2026blackbox} derive black-box confidence from token-level entropy.
These works calibrate an individual judge or a fixed judge-output representation, whereas our setting must also decide how many heterogeneous judges to retain before calibration.

\paragraph{Calibration and conformal extensions.}
Our probability maps use standard post-hoc tools: Platt scaling~\citep{platt1999probabilistic}, beta calibration~\citep{kull2017beta}, and the modern neural-network calibration literature~\citep{guo2017calibration}; NLL is a strictly proper scoring rule~\citep{gneiting2007strictly}.
For distribution-free sets, split conformal prediction~\citep{tibshirani2019conformal,barber2023conformal} provides marginal coverage under exchangeability.
\citet{sesia2025adaptive} and \citet{einbinder2024labelnoise} study conformal prediction under label contamination, while \citet{vanderlaan2024selfcal,vanderlaan2025venn} combine Venn--Abers or Venn-style calibration with conformal guarantees.
These extensions are useful for downstream coverage, but the main contribution here is upstream: calibrated probability estimation from a multi-judge panel.

\paragraph{Broader evaluation and uncertainty frameworks.}
Stacked generalization~\citep{wolpert1992stacked} underlies the idea of learning a second-stage map from base predictions, and the Predictability--Computability--Stability framework~\citep{yu2020veridical} gives a broader lens for separating statistical and computational reliability.
\textsc{Clear}~\citep{azizi2025clear} separates reducible and irreducible uncertainty in regression; we use the same intuition only interpretively when discussing whether new judges add marginal information or whether a panel is saturated.
Prediction-Powered Ranking~\citep{chatzi2024prediction}, \textsc{Precise}~\citep{divekar2026precise}, \citet{dorner2025llm}, and Trust-or-Escalate~\citep{jung2025trust} all study labeled-budget evaluation, but focus on ranking or escalation rather than the inclusion-versus-curation decision for a fixed multi-judge panel.

\section{Proofs}
\label{app:proofs}

\subsection{Bayes-Optimal Log-Odds Aggregation}

\begin{proposition}[Bayes-optimal log-odds aggregation]
\label{prop:logodds}
Under the one-coin model with judge accuracies $a_k\in(0,1)\setminus\{1/2\}$ and conditional independence given $Y$, the posterior log-odds is
\begin{equation}
\log \frac{P(z_t=1\mid\mathbf{y}_t)}{P(z_t=0\mid\mathbf{y}_t)}
= \log\frac{\pi}{1-\pi} + \sum_{k=1}^K (2y_{tk}-1)\log\frac{a_k}{1-a_k}.
\end{equation}
\end{proposition}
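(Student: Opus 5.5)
The plan is a direct Bayes-rule computation in log-odds form. Write $\pi=P(z_t=1)$ for the class prior. Under the one-coin model, judge $k$'s verdict matches the true label with probability $a_k$, whatever the label is:
\[
P(y_{tk}=1\mid z_t=1)=a_k, \qquad P(y_{tk}=1\mid z_t=0)=1-a_k .
\]
Equivalently, $P(y_{tk}\mid z_t=1)=a_k^{y_{tk}}(1-a_k)^{1-y_{tk}}$ and $P(y_{tk}\mid z_t=0)=a_k^{1-y_{tk}}(1-a_k)^{y_{tk}}$. I take $y_{tk}\in\{0,1\}$ for all $k$ here, since the statement sums over all $K$ judges. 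With missing verdicts, the same argument goes through with the sum restricted to $J_t$: a missing vote is assumed uninformative, so it cancels from the likelihood ratio.

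First, apply Bayes' rule to the posterior ratio, where the normalizer $P(\mathbf{y}_t)$ cancels:
\[
\frac{P(z_t=1\mid\mathbf{y}_t)}{P(z_t=0\mid\mathbf{y}_t)}=\frac{\pi}{1-\pi}\cdot\frac{P(\mathbf{y}_t\mid z_t=1)}{P(\mathbf{y}_t\mid z_t=0)}.
\]
Second, use conditional independence given $z_t$ to factor each class-conditional likelihood as $\prod_{k}P(y_{tk}\mid z_t)$. The likelihood ratio then becomes the product of per-judge ratios, and taking logs turns it into a sum of per-judge terms.

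Third, evaluate each per-judge log ratio by cases on the verdict. If $y_{tk}=1$, the ratio is $a_k/(1-a_k)$. If $y_{tk}=0$, it is $(1-a_k)/a_k$, whose log is $-\log\frac{a_k}{1-a_k}$. Both cases are captured by the single expression $(2y_{tk}-1)\log\frac{a_k}{1-a_k}$. Adding the prior log-odds gives the stated identity.

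No step is a real obstacle. The only point needing care is the hypothesis $a_k\in(0,1)$, which guarantees that every likelihood is strictly positive, so all the ratios and logarithms are finite. The exclusion $a_k\neq 1/2$ plays no role in the identity itself: at $a_k=1/2$ the weight is simply zero. I would note this and also remark that with $\pi=1/2$ and $\pm1$ encoding the formula reduces to $\sum_k \alpha_k x_k$, the statistic appearing in Theorem~\ref{thm:calibrated-cjt}.
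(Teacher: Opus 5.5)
Your proposal is correct and follows essentially the same route as the paper's proof. Both write Bayes' rule in odds form, factor the likelihood ratio by conditional independence, and collapse the two verdict cases into the factor $(2y_{tk}-1)\log\frac{a_k}{1-a_k}$; your remark that $a_k\neq 1/2$ is inessential matches the paper's closing note that the identity extends to $a_k=1/2$ with zero weight.
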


\begin{proof}
Under the one-coin model, conditional on $z_t=1$ judge $k$ outputs $y_{tk}=1$ with probability $a_k$ and $y_{tk}=0$ with probability $1-a_k$; conditional on $z_t=0$ these probabilities are reversed. Bayes' rule and conditional independence give
\begin{align*}
\log \frac{P(z_t=1\mid\mathbf{y}_t)}{P(z_t=0\mid\mathbf{y}_t)}
&= \log\frac{\pi}{1-\pi} + \sum_{k=1}^K \log\frac{P(y_{tk}\mid z_t=1)}{P(y_{tk}\mid z_t=0)} \\
&= \log\frac{\pi}{1-\pi} + \sum_{k=1}^K \left[y_{tk}\log\frac{a_k}{1-a_k}+(1-y_{tk})\log\frac{1-a_k}{a_k}\right] \\
&= \log\frac{\pi}{1-\pi} + \sum_{k=1}^K (2y_{tk}-1)\log\frac{a_k}{1-a_k}.
\end{align*}
The identity extends by continuity to $a_k=1/2$, where judge $k$ receives zero weight.
\end{proof}

\subsection{Proof of Monotonicity of Oracle-Calibrated Risk}
\label{app:proof-monotone-risk}

\begin{proof}[Proof of Proposition~\ref{prop:monotone-risk}]
Since $S\subseteq T$, $\sigma(\mathbf{X}_S)\subseteq\sigma(\mathbf{X}_T)$. For Brier risk, the Bayes action is $\eta_S(\mathbf{X}_S)$ and the Bayes risk is $\mathbb{E}[\mathrm{Var}(Y\mid\mathbf{X}_S)]$. The conditional law of total variance gives
\[
\mathrm{Var}(Y\mid\mathbf{X}_S)=\mathbb{E}[\mathrm{Var}(Y\mid\mathbf{X}_T)\mid\mathbf{X}_S]+\mathrm{Var}(\mathbb{E}[Y\mid\mathbf{X}_T]\mid\mathbf{X}_S),
\]
and taking expectations yields $R_{\mathrm{Brier}}(T)\leq R_{\mathrm{Brier}}(S)$.
For NLL, the Bayes risk of the oracle conditional predictor is the conditional entropy. The entropy chain rule yields
\[
H(Y\mid\mathbf{X}_S)-H(Y\mid\mathbf{X}_T)=I(Y;\mathbf{X}_{T\setminus S}\mid\mathbf{X}_S)\geq0,
\]
with equality if and only if $Y\perp\!\!\!\perp \mathbf{X}_{T\setminus S}\mid\mathbf{X}_S$, equivalently the displayed conditional mutual information is zero. This proves both monotonicity and the stated strictness condition.
\end{proof}

\subsection{Proof of the Calibrated Jury Theorem}
\label{app:proof-calibrated-cjt}

\begin{proof}[Proof of Theorem~\ref{thm:calibrated-cjt}]
For a fixed vote vector $\mathbf{x}$, the likelihood ratio between $Y=+1$ and $Y=-1$ is
\[
\log\frac{\Pr(\mathbf{X}=\mathbf{x}\mid Y=+1)}{\Pr(\mathbf{X}=\mathbf{x}\mid Y=-1)}
=\sum_{k=1}^K x_k\log\frac{p_k}{1-p_k}=\sum_{k=1}^K\alpha_k x_k,
\]
so thresholding this log-likelihood ratio is the Bayes decision under equal priors. For one judge, the Bhattacharyya coefficient between the two class-conditional binary channels is
\begin{align*}
\sum_{x\in\{-1,+1\}}\sqrt{\Pr(X_k=x\mid Y=+1)\Pr(X_k=x\mid Y=-1)}
&=2\sqrt{p_k(1-p_k)} \\
&=\sqrt{1-(2p_k-1)^2}.
\end{align*}
Conditional independence makes the joint coefficient factorize, giving
\[
\rho_K=\prod_{k=1}^K\sqrt{1-(2p_k-1)^2}=\exp\!\left(-\sum_{k=1}^K C_k\right).
\]
The Bhattacharyya bound for equal priors gives $\Pr(\delta_K\neq Y)\leq\tfrac12\rho_K$. Finally, $C_k\geq\tfrac12(2p_k-1)^2$ follows from $-\log(1-u)\geq u$ with $u=(2p_k-1)^2$. Hence $\sum_k(2p_k-1)^2=\infty$ implies $\sum_k C_k=\infty$ and the bound tends to zero.
\end{proof}

\subsection{Technical Lemma and Proof}

\begin{lemma}[Bounded clipped log-loss]
\label{lem:btl-bounded}
Let $\hat{p}\in[\eta,1-\eta]$ for some clipping floor $\eta\in(0,1/2)$ and let
$L(z,\hat{p})=-[z\log\hat{p}+(1-z)\log(1-\hat{p})]$ be Bernoulli log-loss.
Then $0\leq L(z,\hat{p})\leq B$ with $B=|\log\eta|$.
The unrestricted Bernoulli log-loss is strictly proper, and on the clipped domain its conditional risk at true probability $q$ is minimized at $\Pi_{[\eta,1-\eta]}(q)$, uniquely at $q$ whenever $q\in[\eta,1-\eta]$.
\end{lemma}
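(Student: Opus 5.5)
The lemma has three claims (a bound, strict propriety, and the clipped minimizer), and I will prove them in that order by direct computation.

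\textbf{Bound.} For $z\in\{0,1\}$ the loss is either $-\log\hat p$ or $-\log(1-\hat p)$. Both arguments lie in $[\eta,1-\eta]\subseteq(0,1)$, so each term is nonnegative. Since $-\log$ is decreasing, each term is at most $-\log\eta=|\log\eta|$. This gives $0\le L\le B$.

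\textbf{Strict propriety.} Fix a true probability $q\in[0,1]$ and write the conditional risk as
\[
r_q(p)=-q\log p-(1-q)\log(1-p).
\]
I would prove strict propriety by the Gibbs inequality. The excess risk is
\[
r_q(p)-r_q(q)=\mathrm{KL}\bigl(\mathrm{Bern}(q)\,\|\,\mathrm{Bern}(p)\bigr),
\]
which is nonnegative and zero only at $p=q$. As an alternative, one can use convexity directly: $r_q''(p)=q/p^2+(1-q)/(1-p)^2>0$ on $(0,1)$, and $r_q'(p)=(p-q)/(p(1-p))$. The boundary cases $q\in\{0,1\}$ need a short separate remark, since $r_q$ is then monotone with infimum approached at the endpoint $q$, where the risk is $0$ under the convention $0\log0=0$.

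\textbf{Clipped minimizer.} On the clipped domain, $r_q$ is strictly convex on the interval $[\eta,1-\eta]$. The sign of $r_q'(p)$ is the sign of $p-q$, so I split into three cases:
\begin{itemize}
\item If $q<\eta$, then $r_q$ is increasing on the interval and the minimizer is $\eta$.
\item If $q>1-\eta$, then $r_q$ is decreasing and the minimizer is $1-\eta$.
\item If $q\in[\eta,1-\eta]$, then $q$ is the unique stationary point in the interval, and strict convexity makes it the unique minimizer.
\end{itemize}
In all three cases the minimizer equals the projection $\Pi_{[\eta,1-\eta]}(q)$, with uniqueness at $q$ in the interior case.

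\textbf{Main obstacle.} The only real care point is the endpoint values $q\in\{0,1\}$ in the unclipped propriety claim. There I would use the Gibbs inequality and interpret $0\log 0=0$, rather than rely on the derivative argument.
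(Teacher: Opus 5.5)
Your proposal is correct and follows essentially the same route as the paper: the same two-case bound $-\log\hat p,\,-\log(1-\hat p)\le-\log\eta$, and the same derivative computation $r_q'(p)=(p-q)/(p(1-p))$, $r_q''(p)>0$. That computation gives strict convexity, the unique unrestricted minimizer $q$, and, after clipping, the projection onto $[\eta,1-\eta]$. Your Gibbs/KL identity and explicit three-case sign analysis spell the same argument out more carefully, including the endpoints $q\in\{0,1\}$, where the paper's one-line claim of a ``unique unrestricted minimizer $p=q$'' holds only in the extended sense.
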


\begin{proof}
If $z=1$, then $L(z,\hat{p})=-\log\hat{p}\leq-\log\eta$; if $z=0$, then $L(z,\hat{p})=-\log(1-\hat{p})\leq-\log\eta$. Nonnegativity is immediate.
For properness, the conditional risk at true Bernoulli probability $q$ is
\[
\ell_q(p)=-q\log p-(1-q)\log(1-p).
\]
On $(0,1)$, $\ell_q'(p)=(p-q)/(p(1-p))$ and $\ell_q''(p)=q/p^2+(1-q)/(1-p)^2>0$, so the unique unrestricted minimizer is $p=q$. Restricting $p$ to $[\eta,1-\eta]$ therefore moves the minimizer to the Euclidean projection of $q$ onto that interval, and leaves it at $q$ when $q$ already lies in the clipped domain.
\end{proof}

\subsection{Nested-Calibration Excess-Risk Bound}

\begin{theorem}[Calibrated excess-risk bound under nested model expansion]
\label{thm:calibration}
Let $\mathcal{F}_0\subseteq\mathcal{F}_1$ be two probabilistic forecaster classes mapping inputs to $[\eta,1-\eta]$, and let
$R(f)=\mathbb{E}[L(Y,f(X))]$ be clipped Bernoulli log-loss risk.
Let $f_0^\star\in\arg\min_{f\in\mathcal{F}_0}R(f)$.
If $\hat{f}_1\in\mathcal{F}_1$ is an $\varepsilon_n$-oracle fit in the expanded class, meaning
\[
R(\hat{f}_1)\leq \inf_{f\in\mathcal{F}_1}R(f)+\varepsilon_n,
\]
then
\[
R(\hat{f}_1)-R(f_0^\star)\leq\varepsilon_n.
\]
Consequently, if $\varepsilon_n\to0$, moving from the smaller calibrated model class to the nested larger class cannot increase asymptotic log-loss.
For empirical risk minimization, the oracle condition holds with $\varepsilon_n=2\Delta_n$, where $\Delta_n=\sup_{f\in\mathcal{F}_1}|R(f)-\hat{R}_n(f)|$.
\end{theorem}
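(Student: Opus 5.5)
The argument is a short chain of inequalities built on nestedness, followed by the standard uniform-deviation argument for empirical risk minimization. Throughout we use Lemma~\ref{lem:btl-bounded}, which guarantees that clipped log-loss takes values in $[0,B]$. Consequently $R(f)$ is finite for every $f$ in either class, and all infima are well defined real numbers.

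\textbf{Step 1 (nestedness).} Since $\mathcal{F}_0\subseteq\mathcal{F}_1$, taking the infimum over the larger set can only decrease it:
\[
\inf_{f\in\mathcal{F}_1}R(f)\le\inf_{f\in\mathcal{F}_0}R(f)=R(f_0^\star),
\]
where the equality uses that $f_0^\star$ is assumed to attain the minimum over $\mathcal{F}_0$. Combining this with the $\varepsilon_n$-oracle hypothesis gives
\[
R(\hat f_1)\le\inf_{f\in\mathcal{F}_1}R(f)+\varepsilon_n\le R(f_0^\star)+\varepsilon_n,
\]
which is the claimed bound. For the asymptotic consequence, take $\limsup_n$ of both sides with $\varepsilon_n\to0$. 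This yields $\limsup_n R(\hat f_1)\le R(f_0^\star)$, so the expanded class is never asymptotically worse than the best member of the smaller class. In the paper's setting, this applies to each inclusion $\mathcal{M}_\omega\subset\mathcal{M}_{\omega,\delta}\subset\mathcal{M}_{\omega,\delta,G}$, because the identity map is a member of each larger family.

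\textbf{Step 2 (ERM satisfies the oracle condition).} Let $\hat f_1\in\arg\min_{\mathcal{F}_1}\hat R_n$. For an arbitrary $f\in\mathcal{F}_1$, the usual three-term decomposition gives
\[
R(\hat f_1)-R(f)=\bigl[R(\hat f_1)-\hat R_n(\hat f_1)\bigr]+\bigl[\hat R_n(\hat f_1)-\hat R_n(f)\bigr]+\bigl[\hat R_n(f)-R(f)\bigr].
\]
The middle bracket is at most $0$ by the definition of the empirical minimizer. Each outer bracket is at most $\Delta_n$ in absolute value. Hence $R(\hat f_1)\le R(f)+2\Delta_n$ for every $f\in\mathcal{F}_1$, and taking the infimum over $f$ shows that the oracle condition holds with $\varepsilon_n=2\Delta_n$.

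If the empirical minimizer is not attained, we instead take a $\gamma$-approximate minimizer and let $\gamma\downarrow0$, or equivalently absorb $\gamma$ into $\varepsilon_n$.

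\textbf{Main obstacle.} The algebra is routine, so the only real difficulty is interpretive. The theorem is substantive only when $\Delta_n\to0$, and that requires a uniform law of large numbers over $\mathcal{F}_1$. This is where the boundedness in Lemma~\ref{lem:btl-bounded} matters: clipping to $[\eta,1-\eta]$ makes the loss bounded, and a low-dimensional parametric calibrator class, such as beta calibration with three parameters, has finite Rademacher complexity. Together these give $\Delta_n=O_P(n^{-1/2})$ via standard symmetrization and McDiarmid arguments. I would state this as a remark rather than prove it, since the theorem itself is conditional on $\varepsilon_n\to0$.
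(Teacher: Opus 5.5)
Your proposal is correct and matches the paper's proof: nestedness gives $\inf_{\mathcal{F}_1}R\le R(f_0^\star)$, which you chain with the oracle condition, and the ERM case is the standard uniform-deviation sandwich giving $2\Delta_n$. The one small difference is that you compare $\hat f_1$ against an arbitrary $f\in\mathcal{F}_1$ and then take the infimum, whereas the paper compares against a population minimizer $f_1^\star\in\arg\min_{\mathcal{F}_1}R$, so your version does not need that minimizer to exist.
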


\begin{proof}
By nesting, $f_0^\star\in\mathcal{F}_1$, so
\[
\inf_{f\in\mathcal{F}_1}R(f)\leq R(f_0^\star).
\]
Combining this with the $\varepsilon_n$-oracle assumption gives
\[
R(\hat{f}_1)\leq\inf_{f\in\mathcal{F}_1}R(f)+\varepsilon_n\leq R(f_0^\star)+\varepsilon_n,
\]
which is the claimed bound. If $\hat{f}_1$ minimizes empirical risk over $\mathcal{F}_1$, then for $f_1^\star\in\arg\min_{f\in\mathcal{F}_1}R(f)$,
\[
R(\hat{f}_1)\leq \hat{R}_n(\hat{f}_1)+\Delta_n\leq \hat{R}_n(f_1^\star)+\Delta_n\leq R(f_1^\star)+2\Delta_n,
\]
so the oracle condition holds with $\varepsilon_n=2\Delta_n$. Lemma~\ref{lem:btl-bounded} ensures the loss is bounded, making standard uniform-convergence bounds applicable to $\Delta_n$ for finite or suitably controlled calibrator classes.
\end{proof}

\subsection{Robustness to Monotone Distortions}

\begin{proposition}[Monotone-distortion robustness of calibrated risk]
\label{prop:monotone}
Let $R\in[\eta,1-\eta]$ be a raw score and let $g$ be a strictly monotone bijection from $[\eta,1-\eta]$ onto its image.
For any post-hoc calibration class $\mathcal{H}$, define the transformed-score class
$\mathcal{H}_g=\{h\circ g^{-1}:h\in\mathcal{H}\}$.
Then
\[
\inf_{h\in\mathcal{H}}\mathbb{E}[L(Y,h(R))]
=
\inf_{\tilde h\in\mathcal{H}_g}\mathbb{E}[L(Y,\tilde h(g(R)))] .
\]
Thus an invertible monotone pre-calibration distortion cannot change the oracle calibrated risk when the calibrator class is transformed along with the score; with $\varepsilon_n$-oracle fitting on either class, the two risks differ by at most the corresponding oracle errors.
\end{proposition}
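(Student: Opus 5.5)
The argument is a direct change of variables on the index set of calibrators. The map $\Phi:\mathcal{H}\to\mathcal{H}_g$, $h\mapsto h\circ g^{-1}$, is surjective by the definition of $\mathcal{H}_g$. For every $h\in\mathcal{H}$ the composed predictor satisfies the pointwise identity
\[
(h\circ g^{-1})(g(R)) = h(g^{-1}(g(R))) = h(R),
\]
because $g$ is a bijection from $[\eta,1-\eta]$ onto its image and $R$ takes values in $[\eta,1-\eta]$. This holds almost surely. Hence $L(Y,\Phi(h)(g(R)))=L(Y,h(R))$ for every realization, and so the two expected losses coincide for each $h$.

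The plan has four steps.

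\begin{enumerate}[leftmargin=*,itemsep=1pt]
\item Establish the pointwise identity above, using only injectivity of $g$ on the range of $R$. Strict monotonicity gives injectivity, and bijectivity onto the image makes $g^{-1}$ well defined on $g([\eta,1-\eta])\ni g(R)$.
\item Conclude that the risk functional factors through $\Phi$, that is, $\mathbb{E}[L(Y,\tilde h(g(R)))]=\mathbb{E}[L(Y,h(R))]$ whenever $\tilde h=\Phi(h)$.
\item Take infima. Since $\Phi$ is surjective, the set of attainable risk values $\{\mathbb{E}[L(Y,\tilde h(g(R)))]:\tilde h\in\mathcal{H}_g\}$ equals $\{\mathbb{E}[L(Y,h(R))]:h\in\mathcal{H}\}$. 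The two infima are therefore equal, including when they are not attained or are $+\infty$.
\item Handle the $\varepsilon_n$-oracle statement. Let $\hat h$ be an $\varepsilon_n$-oracle fit in $\mathcal{H}$ and $\hat{\tilde h}$ an $\varepsilon_n'$-oracle fit in $\mathcal{H}_g$. Writing $R^\star$ for the common infimum, both risks lie in $[R^\star,R^\star+\max(\varepsilon_n,\varepsilon_n')]$. Their difference is therefore at most $\max(\varepsilon_n,\varepsilon_n')\le\varepsilon_n+\varepsilon_n'$, which matches the reading of ``differ by at most the corresponding oracle errors.''
\end{enumerate}

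Monotonicity is stronger than the argument needs; any injection suffices. I will note this, and remark that monotonicity matters only practically, because parametric families such as beta calibration are roughly closed under monotone reparametrizations.

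The main obstacle is measurability and well-definedness rather than any inequality. First, $h\circ g^{-1}$ must be a legitimate forecaster, meaning measurable with values in $[\eta,1-\eta]$ so that Lemma~\ref{lem:btl-bounded} keeps the loss bounded. Second, the expectations must be defined. I would settle both by noting that a strictly monotone function on an interval is Borel measurable with Borel-measurable inverse on its image. Boundedness of $L$ by $|\log\eta|$ then guarantees every expectation is finite.
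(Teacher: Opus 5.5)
Your proposal is correct and follows essentially the same argument as the paper. The pointwise identity $(h\circ g^{-1})(g(R))=h(R)$, together with surjectivity of $h\mapsto h\circ g^{-1}$, shows that $\mathcal{H}$ and $\mathcal{H}_g$ attain exactly the same risk values, so the infima coincide. Your oracle-error step (each fitted risk lies within its own $\varepsilon$ of the common infimum, so they differ by at most $\max(\varepsilon_n,\varepsilon_n')\le\varepsilon_n+\varepsilon_n'$) matches the paper's ``add the two oracle-error bounds'' step, and your measurability remarks just make explicit what the paper leaves implicit.
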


\begin{proof}
Because $g$ is one-to-one, $R$ and $g(R)$ generate the same sigma-field. For every $h\in\mathcal{H}$, the map $\tilde h=h\circ g^{-1}$ satisfies $\tilde h(g(R))=h(R)$ almost surely, so the transformed class attains every risk value attained by the original class. Conversely, every $\tilde h\in\mathcal{H}_g$ has the form $h\circ g^{-1}$ for some $h\in\mathcal{H}$, so it attains no additional risk values. The infima are therefore equal. The final statement follows by adding the two oracle-error bounds from Theorem~\ref{thm:calibration}.
\end{proof}

\subsection{Finite-Sample Hoeffding Bound for Weighted Voting}

\begin{theorem}[Finite-sample improvement of weighted over majority voting]
\label{thm:finite}
Consider $K$ independent binary judges with accuracies $a_k=1/2+\delta_k$ and votes encoded as $2y_k-1\in\{-1,+1\}$.
Let $S_w=\sum_k w_k(2y_k-1)$ and let $S_{\mathrm{mv}}$ be the unweighted sum.
Hoeffding's inequality gives error exponents
\[
E_w=\frac{(\sum_k(2a_k-1)w_k)^2}{2\sum_k w_k^2},\qquad
E_{\mathrm{mv}}=\frac{(\sum_k(2a_k-1))^2}{2K}.
\]
For log-odds weights $w_k=\log(a_k/(1-a_k))$ and small $\max_k|\delta_k|$,
\[
E_w-E_{\mathrm{mv}}=2K\,\mathrm{Var}(\delta_1,\ldots,\delta_K)+O\!\left(\sum_{k=1}^K |\delta_k|^4\right).
\]
Thus the leading second-order gain is positive whenever judge accuracies are heterogeneous; for sufficiently small nonzero $\delta_k$, log-odds weighting improves the Hoeffding exponent over majority voting.
\end{theorem}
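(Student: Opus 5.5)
The plan has two parts: first derive the two Hoeffding exponents directly, then Taylor-expand the log-odds weights in the $\delta_k$ and track the cancellations.

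\emph{Step 1: the exponents.} Condition on the true label; by symmetry take it to be the positive class. Then each summand $w_k(2y_k-1)$ is an independent variable taking values in $[-|w_k|,|w_k|]$, a range of length $2|w_k|$. Its mean is $w_k(2a_k-1)$. An error means $S_w\le 0$, which is a deviation of $\mathbb{E}S_w=\sum_k w_k(2a_k-1)$ below its mean (assuming this mean is positive). Hoeffding's inequality bounds the probability of this event by
\[
\exp\!\Bigl(-\frac{2(\mathbb{E}S_w)^2}{\sum_k(2w_k)^2}\Bigr)=\exp(-E_w).
\]
Setting $w_k\equiv1$ gives $E_{\mathrm{mv}}$ in the same way. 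Positivity of the mean holds automatically for log-odds weights, since $w_k(2a_k-1)\ge 0$ for every $k$.

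\emph{Step 2: expansion.} Write $A=\sum_k\delta_k^2$ and $B=\sum_k\delta_k^4$, and note $2a_k-1=2\delta_k$. Expanding the weights,
\[
w_k=\log\frac{1/2+\delta_k}{1/2-\delta_k}=4\delta_k+\tfrac{16}{3}\delta_k^3+O(\delta_k^5).
\]
For majority voting this gives exactly $E_{\mathrm{mv}}=2(\sum_k\delta_k)^2/K$. For the weighted rule:
\begin{itemize}[leftmargin=*,itemsep=1pt]
\item Numerator: $N=\sum_k 2\delta_k w_k=8A+\tfrac{32}{3}B+O(\sum_k\delta_k^6)$.
\item Denominator: $D=2\sum_k w_k^2=32A+\tfrac{256}{3}B+O(\sum_k\delta_k^6)$.
\end{itemize}
Factor $N=8A(1+\tfrac43 B/A+\dots)$ and $D=32A(1+\tfrac83 B/A+\dots)$. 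Then
\[
E_w=\frac{N^2}{D}=2A\,\frac{(1+\tfrac43 B/A)^2}{1+\tfrac83 B/A}+\dots
\]
The first-order terms in $B/A$ cancel, so $E_w=2A+O(B^2/A+\sum_k\delta_k^6)$.

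The leading terms combine as
\[
2A-\frac{2(\sum_k\delta_k)^2}{K}=2K\Bigl(\frac1K\sum_k\delta_k^2-\bar\delta^2\Bigr)=2K\,\mathrm{Var}(\delta_1,\dots,\delta_K).
\]
This is the claimed leading term.

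\emph{Step 3: controlling the remainder.} This is the step I expect to be the main obstacle: showing that the remainder really is $O(\sum_k|\delta_k|^4)$ uniformly, and not merely $O(\max_k\delta_k^2\cdot A)$. The bounds I would use are:
\begin{itemize}[leftmargin=*,itemsep=1pt]
\item $\sum_k\delta_k^6\le(\max_k\delta_k^2)\,B\le B$.
\item $B^2/A\le(\max_k\delta_k^2)\,B$, using $B\le(\max_k\delta_k^2)A$.
\end{itemize}
To keep the higher-order Taylor terms uniform, I would use the Lagrange form of the remainder for $\log\frac{1+x}{1-x}$ on $|x|\le 1/2$, i.e.\ $|\delta_k|\le 1/4$. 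With $r=B/A$, the ratio $(1+\tfrac43 r)^2/(1+\tfrac83 r)$ equals $1+O(r^2)$ uniformly, because its Taylor expansion has no linear term in $r$.

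\emph{Conclusion.} If the $\delta_k$ are heterogeneous, then $\mathrm{Var}(\delta)$ is of second order while the remainder is of fourth order. Hence the leading gain $2K\,\mathrm{Var}(\delta_1,\dots,\delta_K)>0$ dominates once $\max_k|\delta_k|$ is small enough, giving $E_w>E_{\mathrm{mv}}$.
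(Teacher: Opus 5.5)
Your proposal is correct and follows the paper's proof: Hoeffding's inequality conditional on the positive label, with summands in $[-|w_k|,|w_k|]$, then a Taylor expansion of the log-odds weights, then the identity $\sum_k\delta_k^2=K\bar\delta^2+K\,\mathrm{Var}(\delta_1,\ldots,\delta_K)$. The one difference is that the paper stops at $w_k=4\delta_k+O(\delta_k^3)$; in your notation $A=\sum_k\delta_k^2$, $B=\sum_k\delta_k^4$, this already gives $E_w=2A\,(1+O(B/A))=2A+O(B)$, so the Step~3 obstacle you anticipate does not arise and the first-order cancellation is not needed for the stated remainder, though it does sharpen it to $O(\max_k\delta_k^2\sum_k\delta_k^4)$.
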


\begin{proof}
Conditional on $Y=1$, $X_k=w_k(2y_k-1)$ is independent, lies in $[-|w_k|,|w_k|]$, and has mean $(2a_k-1)w_k$.
Hoeffding's inequality applied to $S_w=\sum_kX_k$ gives
\[
\Pr(S_w\leq0\mid Y=1)\leq\exp\!\left(-\frac{(\sum_k(2a_k-1)w_k)^2}{2\sum_kw_k^2}\right)=\exp(-E_w),
\]
and the same calculation with $w_k\equiv1$ gives $E_{\mathrm{mv}}$.
Now write $a_k=1/2+\delta_k$. Since
\[
w_k=\log\frac{1+2\delta_k}{1-2\delta_k}=4\delta_k+O(\delta_k^3),
\]
we have
\[
\sum_k(2a_k-1)w_k=\sum_k 2\delta_k(4\delta_k+O(\delta_k^3))=8\sum_k\delta_k^2+O\!\left(\sum_k|\delta_k|^4\right)
\]
and
\[
\sum_kw_k^2=16\sum_k\delta_k^2+O\!\left(\sum_k|\delta_k|^4\right).
\]
Therefore $E_w=2\sum_k\delta_k^2+O(\sum_k|\delta_k|^4)$, while
$E_{\mathrm{mv}}=2(\sum_k\delta_k)^2/K=2K\bar\delta^2$.
Using $\sum_k\delta_k^2=K\bar\delta^2+K\mathrm{Var}(\delta_1,
\ldots,\delta_K)$ yields the displayed expansion.
\end{proof}

\section{Null Results on Preference Data}
\label{app:null-results}

On six subjective preference datasets with $10{+}$ exogenous comparison-level reliability signals, all effects are statistically equivalent to zero (two one-sided test (TOST) equivalence tests, margin $\pm 0.005$ NLL, all $p < 0.001$).

\begin{table}[h]
\centering
\caption{Six preference datasets where comparison-level reliability weighting is equivalent to zero under TOST (all $p<0.001$).}
\label{tab:null-appendix}
\small
\begin{tabular}{@{}lrrllr@{}}
\toprule
Dataset & $M$ & $n$ & Best exogenous signal & NLL diff (95\% CI) & TOST $p$ \\
\midrule
MT-Bench & 6 & 3.3K & Judge leave-one-class-out (LOCO) agreement & $+0.0002\ [-0.0001, +0.0004]$ & $<0.001$ \\
PRISM & 21 & 22K & English proficiency & $+0.0001\ [-0.0000, +0.0001]$ & $<0.001$ \\
Arena (orig.) & 63 & 40K & Pair consistency & Hurts & --- \\
UltraFeedback & 17 & 75K & Cross-dimension & Hurts & --- \\
Arena 140K & 53 & 98K & Language / fatigue & $+0.0001\ [+0.0000, +0.0001]$ & $<0.001$ \\
OASST1 & 1782 & 31K & Review count / language & $+0.0000\ [-0.0000, +0.0001]$ & $<0.001$ \\
\bottomrule
\end{tabular}
\end{table}

\section{Uncalibrated Selection Details}
\label{app:uncalibrated-selection}

Table~\ref{tab:uncalibrated} shows why pruning is tempting before calibration: on JudgeBench, Top-5 selection raises raw accuracy by about $5.3$ percentage points over the all-judge Bayesian one-coin aggregator, while the uncalibrated all-judge posterior is overconfident (NLL $>1.2$).
These uncalibrated gains motivate the calibrated comparison in Section~\ref{sec:selection-calibration}, where the same top-$k$ choices lose after beta calibration.

\section{Headline Inference Details}
\label{app:headline-inference}

For the inclusion-versus-top-$k$ comparisons in Table~\ref{tab:topk-inference}, the repeated $50/50$ calibration/evaluation splits share items and are not treated as independent experimental units.
We use bootstrap intervals over splits for effect-size uncertainty, and item-level paired permutation tests within each split as the primary significance test rather than split-level paired $t$-tests over split means.
The table reports the median paired-test statistic and $p$-value across the 100 splits; reporting the minimum or maximum across splits gives the same significance calls.

\section{Joint Test Across Label Budgets}
\label{app:joint-test}

The label-budget claim is \emph{simultaneous} across a grid of budgets $\mathcal{T} = \{\tau_1, \ldots, \tau_m\}$, so per-$\tau$ $p$-values are not the right inferential object.
We apply the Gaussian multiplier bootstrap of~\citet{cck2013}.

For each dataset, we run $B=200$ random $50/50$ calibration/test splits.
For split $b$ and budget $\tau \in \mathcal{T}$, we record the per-split difference
\[
T_b(\tau) = \mathrm{NLL}_{\mathrm{top}5}^{(b)}(\tau) - \mathrm{NLL}_{\mathrm{all}}^{(b)}(\tau).
\]
Under the null hypothesis $H_0: \mathbb{E}[T(\tau)] = 0 \ \forall \tau$ (inclusion and top-$5$ equal at every budget), the standardized vector
$\sqrt{B}\,\bar T(\tau) / \widehat{\mathrm{se}}(\tau)$ is approximately $\mathcal{N}(0, \widehat{\mathrm{Corr}})$, where $\widehat{\mathrm{Corr}}$ is the empirical correlation of the centered $T_b$.
We draw $10{,}000$ centered Gaussians $\xi^{(m)} \sim \mathcal{N}(0, \widehat{\mathrm{Corr}})$ and compute the null distribution of $M^{(m)} = \max_{\tau \in \mathcal{T}} |\xi^{(m)}(\tau)|$, then compare against the observed $M_{\mathrm{obs}} = \max_\tau |\sqrt{B}\,\bar T(\tau) / \widehat{\mathrm{se}}(\tau)|$.
This controls the familywise error rate over all budgets simultaneously.
Table~\ref{tab:joint-test} reports the results.

\begin{table}[h]
\centering
\caption{Gaussian multiplier joint test of inclusion versus top-$5$ across calibration budgets. All datasets reject the joint null at level $0.001$.}
\label{tab:joint-test}
\small
\begin{tabular}{@{}lrrrrrr@{}}
\toprule
Dataset & $|\mathcal{T}|$ & $M_{\mathrm{obs}}$ & $c_{0.95}$ & $c_{0.99}$ & $c_{0.999}$ & $p_{\mathrm{joint}}$ \\
\midrule
JudgeBench   & 8  & $15.21$ & $2.61$ & $3.11$ & $3.68$ & $<10^{-4}$ \\
RewardBench  & 9  & $23.09$ & $2.72$ & $3.21$ & $3.81$ & $<10^{-4}$ \\
RewardBench2 & 9  & $87.20$ & $2.72$ & $3.22$ & $3.80$ & $<10^{-4}$ \\
LLMBar       & 8  & $11.63$ & $2.67$ & $3.16$ & $3.73$ & $<10^{-4}$ \\
\bottomrule
\end{tabular}
\end{table}

The observed max-$t$ statistic exceeds the $99.9\%$ null quantile by $4.1\times$ on JudgeBench, $6.1\times$ on RewardBench, $23\times$ on RewardBench~2, and $3.1\times$ on LLMBar.
Because the grid has $|\mathcal{T}| \in \{8, 9\}$ moderately correlated budgets, the correlation-aware critical value $c_{0.95} \approx 2.6$--$2.7$ is tighter than the Bonferroni-corrected threshold $z_{1 - 0.025/9} \approx 2.98$; the test is both tighter and accounts for correlation, unlike a per-point Bonferroni correction.

The single exception is LLMBar at Top-10 (Table~\ref{tab:topk-inference}: $\Delta = -0.001$, $t = -2.7$, $p = 0.997$ wrong-sided), which corresponds to the scope boundary identified in Section~\ref{sec:family-clustering}: two GPU judges with below-chance accuracy (Gemma~2~9B at $41\%$, Mistral~7B at $45\%$) inject noise that beta calibration at $K{=}53$ cannot fully repair.
Even in this boundary case, Top-10 does not \emph{significantly} outperform inclusion---it merely ties.
Top-3 and Top-5 on LLMBar remain strongly significant ($t = 16.6$ and $9.8$, both $p < 0.001$), indicating that the boundary is narrow: only the specific combination of below-chance judges + moderate $K$ + liberal selection threshold ($k{=}10$, retaining $19\%$ of the panel) produces a non-significant reversal.
The exception is identified \emph{a priori} by the scope heuristic, and even in this worst case the effect size is negligible ($\Delta = 0.0007$ NLL).

\section{Additional Experiments}
\label{app:additional}

\subsection{Judge Panels, Augmentation, and Inference Protocol}
\label{app:judge-list}

\begin{figure*}[t]
\centering
\resizebox{\textwidth}{!}{%
\begin{tikzpicture}[
    >=Latex,
    font=\small,
    box/.style={
        draw,
        rounded corners=3pt,
        thick,
        align=center,
        minimum height=1.55cm,
        text=textdark
    },
    inputbox/.style={box, draw=inblueborder, fill=inblue, minimum width=3.3cm},
    prepbox/.style={box, draw=prepborder, fill=prepfill, minimum width=3.3cm},
    modelbox/.style={box, draw=modelborder, fill=modelfill, minimum width=3.5cm},
    parserbox/.style={box, draw=parserborder, fill=parserfill, minimum width=13.8cm, minimum height=2.1cm},
    outgood/.style={box, draw=modelborder, fill=modelfill, minimum width=5.5cm, minimum height=1.7cm},
    outbad/.style={box, draw=neutralborder, fill=neutralfill, minimum width=5.5cm, minimum height=1.7cm},
    legendbox/.style={draw, rounded corners=2pt, minimum width=0.6cm, minimum height=0.45cm},
    arrow/.style={-Latex, thick},
    downarrow/.style={-Latex, thick}
]

\node[inputbox] (pair) at (0,0)
{Pair item\\
\footnotesize (instruction + A, B)};

\node[prepbox] (rand) at (4.2,0)
{Randomize\\
A/B order};

\node[prepbox] (prompt) at (8.6,0)
{MT-bench pairwise\\
prompt $\rightarrow$ ``[[A]]/[[B]]''};

\node[modelbox] (backend) at (13.1,0)
{Judge backend\\
\footnotesize API ($T{=}0$) / GPU ($T{=}0.01$)};

\node[inputbox] (raw) at (17.6,0)
{Raw text\\
\footnotesize(strip \texttt{<think>...\ </think>})};

\draw[arrow] (pair.east) -- (rand.west);
\draw[arrow] (rand.east) -- (prompt.west);
\draw[arrow] (prompt.east) -- (backend.west);
\draw[arrow] (backend.east) -- (raw.west);

\node[parserbox] (parser) at (8.8,-3.2)
{\textbf{Four-level parser}\\[2pt]
\footnotesize
(1) \texttt{[[A]]} / \texttt{[[B]]}
\quad $\rightarrow$ \quad
(2) \texttt{**A**} / \texttt{**B**}
\quad $\rightarrow$ \quad
(3) trailing \texttt{Output/Solution/Assistant A|B}
\quad $\rightarrow$ \quad
(4) bare trailing \texttt{A|B}
};

\draw[downarrow] (raw.south) -- ++(0,-0.7) -| ($(parser.north)+(4.3,0)$);

\node[outgood] (good) at (5.2,-6.3)
{Verdict $\in \{\texttt{A>B}, \texttt{B>A}\}$\\
$\rightarrow$ compare to randomized GT\\
\footnotesize(correct / wrong)};

\node[outbad] (bad) at (12.8,-6.3)
{No match $\rightarrow$ missing ($-1$)\\
\footnotesize judge skipped for this item};

\draw[downarrow] ($(parser.south)+(-3.0,0)$) -- (good.north);
\draw[downarrow] ($(parser.south)+(3.0,0)$) -- (bad.north);

\node[legendbox, draw=inblueborder, fill=inblue] (l1) at (0.2,-8.1) {};
\node[anchor=west] at (0.55,-8.1) {\footnotesize Input/data};

\node[legendbox, draw=prepborder, fill=prepfill] (l2) at (3.3,-8.1) {};
\node[anchor=west] at (3.65,-8.1) {\footnotesize Preprocessing};

\node[legendbox, draw=modelborder, fill=modelfill] (l3) at (6.9,-8.1) {};
\node[anchor=west] at (7.25,-8.1) {\footnotesize Model};

\node[legendbox, draw=parserborder, fill=parserfill] (l4) at (9.5,-8.1) {};
\node[anchor=west] at (9.85,-8.1) {\footnotesize Parser};

\end{tikzpicture}
}
\caption{Inference pipeline for augmented judges. Randomized A/B prompts are scored by API or GPU judges, parsed by a four-level fallback parser, and unparseable outputs are marked missing.}
\label{fig:appendix_inference_pipeline}
\end{figure*}


This subsection documents the judge panels, the rationale for augmenting the native benchmarks with additional LLM-as-a-judge systems, and the inference pipeline (Figure~\ref{fig:appendix_inference_pipeline}).

Two of the four benchmarks, RewardBench~\citep{lambert2024rewardbench} and RewardBench~2~\citep{malik2025rewardbench2}, ship with enough judges ($100$ and $174$ reward models) that the accuracy distribution is already wide and the aggregator has plenty of signal.
The other two are different.
JudgeBench~\citep{judgebench2025} bundles verdicts from $32$ native judges concentrated in a narrow $54\%$--$84\%$ accuracy band, and LLMBar~\citep{zeng2024llmbar} ships with $44$ evaluator records that come from a small number of backbone models combined with several prompt strategies, so even ``different'' LLMBar judges are often the same base model under different instructions, which undercuts one of our claims (that \emph{heterogeneous} weak judges carry material calibration signal).
To stress-test that claim we extended both panels with a mix of API-hosted models and local GPU-hosted open-weight models that span a wider accuracy range and a wider set of backbone families.
Concretely, we added $6$ API judges to JudgeBench and $5$ API $+$ $4$ GPU judges to LLMBar; for the saturation analysis we further added the same $4$ GPU judges to both JudgeBench and RewardBench~2.
Because RewardBench on its own already has a wide, redundancy-free panel, we did not augment it.

All added judges share an identical MT-bench-style pairwise prompt~\citep{zheng2023judging}: the judge is told it is an ``impartial judge,'' given the user instruction and the two candidate responses (as ``Assistant~A'' and ``Assistant~B''), and asked to produce a short justification followed by a strict verdict token, either \verb+[[A]]+ or \verb+[[B]]+.
For JudgeBench's coding subset we substitute an analogous prompt that shows the two code solutions instead of free-form responses.
To remove position bias, we randomize which of the two responses is labelled Assistant~A and which is Assistant~B \emph{independently per item}; a verdict is then scored as correct when it matches the randomized-ground-truth side.
This sidesteps the need to query each judge twice (with and without a swap) and keeps the prediction matrix tractable.

For all API backends (OpenAI, Anthropic, and Groq, which we use as a fast host for open-weight API judges) we set temperature to $0$ so the verdict is deterministic conditional on the random A/B permutation.
GPU judges (4-bit quantization, near-greedy sampling) use temperature $0.01$, the smallest non-zero value that keeps the HuggingFace generation pipeline in sampling mode and avoids a deterministic decoding failure we observed with temperature exactly $0$ on some checkpoints.
Maximum output length is $400$ tokens for OpenAI/Groq, $1500$ for Anthropic (its responses are reliably longer because of its preamble), $256$ for small CPU models, and $512$ for GPU models.

A single judge response is often longer and messier than a clean \verb+[[A]]+ token. Different backbones favour different formats, and some (e.g.\ Qwen3) emit \verb+<think>+\ldots\verb+</think>+ reasoning traces before the final answer.
We strip any reasoning block and then run a four-level parser that tries progressively looser matches:
(i) the strict bracketed token \verb+[[A]]+ or \verb+[[B]]+;
(ii) markdown-bold \verb+**A**+ or \verb+**B**+;
(iii) trailing ``Output~A'' / ``Solution~A'' / ``Assistant~A'' (or the $B$ variants) in the final $80$ characters of the response;
(iv) a bare $A$ or $B$ as the last non-whitespace character.
The parser stops at the first level that matches; if none matches, the verdict is recorded as \emph{missing} ($-1$ in the prediction matrix) and the aggregator simply skips that judge for that item.
Levels~(ii)--(iv) cover common failure modes we observed empirically; for example, Llama~3.1~8B under some prompts produces a bolded verdict without brackets, and several smaller models end with an unadorned ``A.'' or ``B.'' on a final line.

Table~\ref{tab:judge-panel} gives per-judge, per-dataset verdict counts.
For the four $7$B--$9$B GPU judges, the parser recovers $92\%$--$100\%$ of items on JudgeBench and RewardBench~2; on LLMBar, three of the four GPU judges cover $99\%$--$100\%$ of items, while the Llama~3.1~8B judge produces $54\%$ parseable verdicts even after a re-run with a revised prompt template.
This last judge is the \emph{single} concrete case that triggers the parseability condition~(P1) in our operational scope heuristic (Section~\ref{par:scope-failures}); we keep it in the main panel to document that the heuristic catches real-world noise rather than a hypothetical failure mode.
API judges return either a complete response or a clearly-marked error, and do not contribute additional parseability failures.

\begin{table}[h]
\centering
\caption{Judge panel composition and parsed-verdict coverage. Main panels use $K{=}38$ (JB), $174$ (RB2), and $53$ (LLMBar); daggers mark runs reported but not included.}
\label{tab:judge-panel}
\small
\begin{tabular}{@{}l l rrr@{}}
\toprule
Source & Judge (model) & JudgeBench & RB2 & LLMBar \\
\midrule
\multirow{6}{*}{API}
 & GPT-4o-mini                          & $309/350$ & $1800/1865^{\dagger}$ & $\approx 100\%$ \\
 & Claude~3 Haiku                       & $164/350$ & $1130/1865^{\dagger}$ & $\approx 100\%$ \\
 & Qwen3 32B                            & $197/350$ & $1834/1865^{\dagger}$ & $\approx 100\%$ \\
 & Llama~3.1 8B Instant                 & $195/350$ & $1796/1865^{\dagger}$ & $\approx 100\%$ \\
 & Llama~3.3 70B Versatile              & $249/350$ & $1809/1865^{\dagger}$ & $\approx 100\%$ \\
 & Llama~4 Scout 17B (16e)              & $191/350$ & $1695/1865^{\dagger}$ & $\approx 100\%$ \\
\midrule
\multirow{4}{*}{GPU (7B--9B)}
 & Qwen~2.5 7B-Instruct                 & $348/350$ & $1723/1865$ & $419/419$ \\
 & Gemma~2 9B-it                        & $349/350$ & $1759/1865$ & $417/419$ \\
 & Llama~3.1 8B-Instruct                & $315/350$ & $1710/1865$ & $227/419$ \\
 & Mistral~7B-Instruct-v0.3             & $230/350$ & $1753/1865$ & $419/419$ \\
\bottomrule
\end{tabular}
\end{table}

The $\dagger$-marked RB2 entries reflect a separate set of API judge runs we collected on the same $1865$ items via OpenAI, Anthropic, and Groq endpoints; their accuracies on the items they cover are $0.79$ (GPT-4o-mini), $0.69$ (Claude~3 Haiku), $0.87$ (Qwen3~32B), $0.71$ (Llama~3.1 8B Instant), $0.83$ (Llama~3.3 70B Versatile), and $0.83$ (Llama~4 Scout~17B), all comparable to or above the median accuracy of the $174$ native reward models.
We keep the main RB2 panel at $K{=}174$ (or $K{=}178$ in the GPU saturation analysis) so that the saturation analysis in Section~\ref{sec:saturation} is not confounded by adding heterogeneous LLM-judge signal.
Folding these API judges into RB2 to test whether heterogeneous additions can break out of the redundancy floor is a natural extension, separate from the GPU experiment we already report.

On JudgeBench we retain the $32$ native judges and add $6$ API judges for the main panel ($K{=}38$); the four GPU judges are reserved for the saturation analysis because their lower coverage (65--100\% of items) and below-median accuracy would conflate the inclusion-vs-selection comparison with a coverage-missingness confound.
On RewardBench the main panel is the $100$ native reward models augmented with the four GPU judges ($K{=}104$).
On RewardBench~2 the main panel is the $174$ native reward models; the saturation experiment adds the four GPU judges to give $K{=}178$.
On LLMBar the main panel is $44$ native evaluators (multiple backbones $\times$ prompt strategies from the benchmark release) augmented with $5$ API judges and $4$ GPU judges for $K{=}53$; the single low-coverage entry is the GPU Llama~3.1 8B judge at $227/419$, which is the concrete case that triggers the parseability condition (P1) in the operational scope heuristic.

\subsection{Curation Penalty and Calibrated NLL Trajectory}
\label{app:teaser}

Figure~\ref{fig:teaser} plots the calibrated NLL as a function of the retained panel size for all four benchmarks.
On every dataset, top-$3$ selection incurs $1.2\times$--$2.5\times$ higher NLL than the full panel, and the trajectory generally improves as more judges are retained.
The curves complement the main-text violin plot by showing the full $k$-dependent trajectory rather than a single endpoint comparison.

\begin{figure}[h]
\centering
\includegraphics[width=0.95\textwidth]{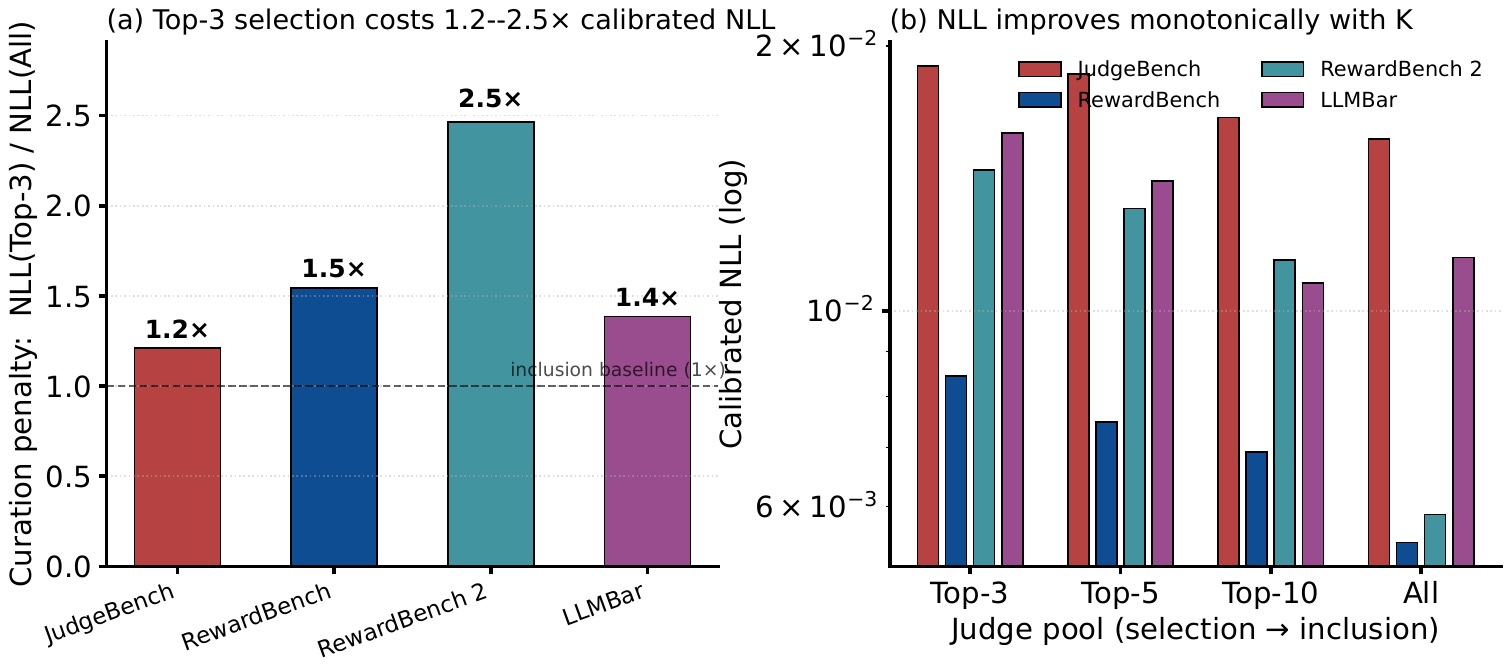}
\caption{Curation penalty and calibrated NLL trajectory across four benchmarks. Top-$3$ selection costs $1.2\times$--$2.5\times$ more NLL, and calibrated NLL generally improves as the retained panel grows.}
\label{fig:teaser}
\end{figure}

\subsection{Full NLL-vs-$k$ Curves (Inclusion vs.\ Selection)}
\label{app:inclusion-curves}

Figure~\ref{fig:inclusion-curves} shows the underlying NLL-vs-$k$ trajectories summarized by the main-paper violin+arrow plot.
Each panel plots calibrated and uncalibrated NLL as a function of the retained pool fraction $k/K_\mathrm{max}$, with error bars from $100$ random $50/50$ calibration/test splits.
The main-paper regime flip is the signed difference between the red and blue curves: red curves favor small selected panels before calibration, while blue curves generally favor retaining the full panel after calibration.
The main text uses the compact violin/arrow view; the full curves here show the trajectory at intermediate $k$ values.

\begin{figure}[h]
\centering
\includegraphics[width=\textwidth]{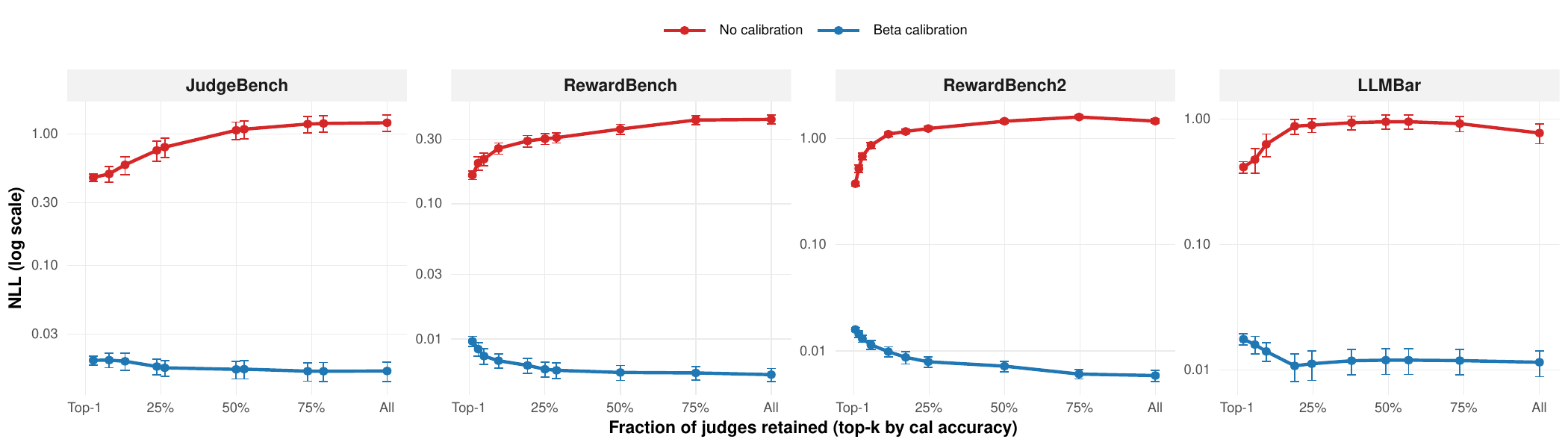}
\caption{Full NLL-vs-$k$ curves with and without beta calibration. Selection helps before calibration, while beta calibration generally shifts the optimum toward the full panel.}
\label{fig:inclusion-curves}
\end{figure}

\subsection{Stronger Curator Details}
\label{app:curator-details}

Greedy forward selection adds judges one at a time to minimize calibrated NLL on an inner calibration split, with $k_{\max}\in\{8,10,15\}$ chosen according to panel size.
It is calibration-aware but can overfit on larger panels; for example, on RewardBench greedy reaches NLL $0.0102$, worse than the simple top-$10$ baseline at $0.0069$.
$\ell_1$-sparse logistic regression fits a learned sparse combination on the raw vote matrix; across $C\in\{10^{-3},10^{-2},10^{-1},1\}$ it refuses to sparsify and effectively recovers the full panel.
Oracle top-$5$ ranks judges by test-set accuracy and is included only as a diagnostic: it asks whether perfect accuracy ranking at fixed $k=5$ would rescue pruning.
It is not a deployable method, because a calibration-aware curator could in principle do better.
Random-$5$ is a sanity floor for small selected panels.

For the same-pipeline subset search in Table~\ref{tab:same-pipeline}, each candidate subset $S$ refits the full Bayesian one-coin aggregator on the outer calibration split and beta calibration on an inner split, then selects the subset by inner-split calibrated NLL and reports NLL on the untouched test split.
On the family-deduplicated JudgeBench panel ($K=9$) we enumerate all $2^9-1=511$ nonempty subsets.
On the full panels, exhaustive search is infeasible, so we use forward beam search (width $3$, up to $20$ additions) and backward greedy elimination as complementary approximations.
This protocol is stricter than ordinary top-$k$ pruning because the subset is chosen after refitting the same aggregator+calibrator pipeline used by the all-judges arm.
Questions about whether the aggregation step itself matters are separated from the curator comparison and evaluated in Appendix~\ref{app:aggregator-comparison}.

\subsection{IRT Variance Decomposition and the Saturation Mechanism}
\label{app:epi-alea}

We fit a 2PL IRT model $P(\mathrm{correct} \mid j, i) = \sigma(\alpha_j(\theta_j - \beta_i))$ on each of the four datasets and decompose the total logit-scale variance into a judge-heterogeneity share $\mathbb{E}[\alpha^2]\mathrm{Var}(\theta)$ and an item-difficulty share $\mathbb{E}[\alpha^2]\mathrm{Var}(\beta)$ following~\citet{lord1968irt}.
The judge share is \emph{dominant on every dataset}: JudgeBench $84.3\%$, RewardBench $99.8\%$, RewardBench~2 $96.4\%$, LLMBar $94.2\%$.
This rules out the simplistic reading that RewardBench~2 saturates because item-intrinsic (aleatoric) variance has taken over; it has not.
What has happened is that the $174$ RewardBench~2 reward models come from a small number of closely related training recipes (shared base models, preference data, and evaluation styles), so the \emph{marginal} information a new judge carries given the remaining $173$ is already small, not that item difficulty has consumed the variance budget.
We interpret the RewardBench~2 boundary as a \emph{redundancy}-induced saturation, not an aleatoric one, and report the full IRT split in text rather than in a separate schematic.

The IRT $\theta$ ordering is also nearly identical to the simple raw-accuracy ordering: $\text{Corr}(\theta, \text{raw accuracy}) = 0.995$ on JudgeBench, confirming that the accuracy-based judge ranking used by top-$k$ pruning recovers the IRT-optimal ordering.
The dominant source of evaluation uncertainty is \emph{which judges you have}, not \emph{which items you evaluate}, which motivates the per-judge modeling that the main pipeline performs.

Separately, JudgeBench's position-swapped verdicts show that reward models achieve $99$--$100\%$ self-consistency, while LLM judges range from $40\%$ (JudgeLM-7B) to $85\%$ (o1-preview).
Position bias is the dominant within-judge noise source for LLM judges, and we mitigate it by randomizing the A/B order per item (Sec.~\ref{app:judge-list}).

\subsection{Synthetic Narrow-Band Ablation}
\label{app:narrow-band}

Theorem~\ref{thm:finite} predicts that the Hoeffding-exponent advantage of log-odds weighting over majority voting scales with the cross-judge accuracy variance $\mathrm{Var}(\delta_1,\ldots,\delta_K)$.
We test this on synthetic panels ($K{=}30$ judges, mean accuracy $0.70$, $N{=}2000$ items, 100 replicates) by varying the accuracy standard deviation from $0.01$ to $0.30$ and comparing log-odds weighting against top-half and top-third pruning on point accuracy.
When judges are nearly homogeneous ($\sigma{=}0.01$), pruning discards useful votes for no gain and log-odds weighting leads the best pruning rule by $3.2$ percentage points.
As heterogeneity grows, both methods approach perfect accuracy and the gap shrinks to below $0.01$ points at $\sigma \ge 0.15$.
The pattern is consistent with the theorem: weighting is most valuable precisely when judges are similar and pruning is most wasteful, while at high heterogeneity both approaches saturate.

\subsection{Matched-Calibrator Budget Sweep}
\label{app:matched-cal}

To verify that the $n_\mathrm{cal}\approx 30$ phase transition reported in Section~\ref{sec:label-budget} reflects the beta calibrator specifically, we repeated the budget sweep with two simpler alternatives, Platt scaling (2 parameters) and temperature scaling (1 parameter), while holding the aggregator (Bayesian one-coin on the full panel) fixed.
The calibrator is the only change.

\begin{table}[h]
\centering
\caption{Matched-calibrator budget sweep on the full judge panel. Beta calibration shows the sharp NLL drop near $n_\mathrm{cal}{=}30$; Platt and temperature remain nearly flat.}
\label{tab:matched-cal}
\small
\begin{tabular}{@{}l c rrrrr@{}}
\toprule
Dataset & Calibrator & $n{=}20$ & $n{=}30$ & $n{=}50$ & $n{=}100$ & $n{=}150$ \\
\midrule
\multirow{3}{*}{JudgeBench}
 & beta        & $0.567$ & $\mathbf{0.018}$ & $0.017$ & $0.016$ & $0.016$ \\
 & Platt       & $1.230$ & $1.230$ & $1.230$ & $1.230$ & $1.230$ \\
 & temperature & $0.567$ & $0.561$ & $0.552$ & $0.549$ & $0.547$ \\
\midrule
\multirow{3}{*}{RewardBench}
 & beta        & $0.288$ & $\mathbf{0.049}$ & $0.007$ & $0.006$ & $0.006$ \\
 & Platt       & $0.413$ & $0.413$ & $0.413$ & $0.413$ & $0.413$ \\
 & temperature & $0.288$ & $0.263$ & $0.241$ & $0.236$ & $0.233$ \\
\midrule
\multirow{3}{*}{RewardBench~2}
 & beta        & $0.558$ & $\mathbf{0.006}$ & $0.006$ & $0.006$ & $0.006$ \\
 & Platt       & $1.445$ & $1.445$ & $1.445$ & $1.445$ & $1.445$ \\
 & temperature & $0.558$ & $0.536$ & $0.528$ & $0.524$ & $0.521$ \\
\midrule
\multirow{3}{*}{LLMBar}
 & beta        & $0.431$ & $\mathbf{0.022}$ & $0.012$ & $0.011$ & $0.011$ \\
 & Platt       & $0.768$ & $0.768$ & $0.768$ & $0.768$ & $0.768$ \\
 & temperature & $0.431$ & $0.400$ & $0.387$ & $0.380$ & $0.376$ \\
\bottomrule
\end{tabular}
\end{table}

The pattern is consistent across all four benchmarks: beta calibration collapses NLL by roughly an order of magnitude once $n_\mathrm{cal}\ge 30$, whereas Platt scaling is frozen by its intercept-plus-slope parameterization and temperature scaling cannot shift the overall scale of the Bayesian one-coin output.
The transition is better read as ``when the beta calibrator's three parameters become identifiable'' than as a population-level statement about epistemic uncertainty.

\subsection{Additional Scope Diagnostics}
\label{app:scope-diagnostics}

We evaluate calibration requirements by sweeping the calibration fraction from 5\% to 70\% over 50 random splits.
Below 20\%, calibration reduces NLL by 40--54\% but leaves point accuracy nearly unchanged.
At 30\%+ (${\sim}100$ calibration items), NLL drops by 93\% and accuracy reaches 99\%+; performance is stable from 30\%--70\% (NLL 0.076--0.087).
Grouped difficulty splits reveal the expected asymmetry: calibrating on hard items generalizes well to easy items (NLL $4.92\to0.075$ with Platt), whereas calibrating on easy items only partially transfers to hard items (NLL $3.49\to0.49$).
This is the support-gap case where $Y\not\!\perp D\mid S$: hard and easy items with the same raw score need not share the same empirical accuracy.
Difficulty-aware calibration, stratified Platt, and importance-weighted Platt reduce but do not remove the gap, so the operational recommendation is to use random calibration splits covering the full difficulty range.

Calibration rescales confidence without changing the answer ordering: the Spearman rank correlation between raw and calibrated scores is $0.94\pm0.01$.
This addresses the concern that low NLL is obtained by changing the underlying preference decisions rather than by correcting probabilities.

The main-text operating rule uses only calibration-set quantities.
A judge $k$ should be reviewed if its parsed verdict rate is below $90\%$, and filtered if the failure is severe (around $70\%$ or lower) or if its calibration-set accuracy is reliably below chance: $\hat a_k < 0.5 - 2\widehat{\mathrm{se}}_k$, where $\widehat{\mathrm{se}}_k = \sqrt{\hat a_k(1-\hat a_k)/n_k}$.
When adding judges to a large panel ($K\gtrsim100$) changes calibrated NLL by only $\pm1\%$ around the floor, we treat the pool as saturated and prefer either filtering the offending judges or switching from Bayesian one-coin to Dawid--Skene.

The scope diagnostics in Section~\ref{par:scope-failures} correspond to three empirical boundary cases.
First, saturation appears when the existing pool is already at the calibrator's sharpness floor: on RewardBench~2, Bayesian one-coin with $174$ full-coverage reward models has calibrated NLL $\approx0.006$, and four $7$B--$9$B GPU judges with 5--8\% missing verdicts slightly raise NLL ($p<10^{-3}$).
The same GPU judges help under Dawid--Skene ($p=0.002$), so the boundary is aggregator-conditional rather than absolute.
Second, parser breakdown turns disagreement into measurement noise: on LLMBar, the Llama-3.1 GPU judge retains only 54\% valid verdicts after parser tuning, while the other three GPU judges recover 99--100\% coverage.
Third, aggregator mismatch matters because Bayesian one-coin is less robust to missing entries in an otherwise full-coverage pool than EM-estimated Dawid--Skene weights.

Finally, on Chatbot Arena with 63 models, calibrated log-odds aggregation recovers clean NLL under 40\% controlled corruption.
Across seven distortion types ($r^2$, $\sqrt{r}$, noisy $r+\mathcal{N}(0,\sigma^2)$ at two noise levels, binary noise, inversion, and random shuffle), calibrated methods remain well-calibrated; the monotone cases match Proposition~\ref{prop:monotone}.

\subsection{Does the Aggregator Step Matter?}
\label{app:aggregator-comparison}

We compare four aggregators on the same pipeline---raw vote fraction, the Bayesian one-coin model of Proposition~\ref{prop:logodds}, Dawid--Skene EM, and logistic stacking with out-of-fold predictions---each followed by beta calibration, on all four datasets with 100 random 50/50 splits.

\begin{center}
\small
\begin{tabular}{@{}l cccc@{}}
\toprule
Method & JudgeBench & RewardBench & RewardBench~2 & LLMBar \\
\midrule
Vote fraction + Beta   & $0.022$ & $0.020$ & $0.021$ & $0.023$ \\
\textbf{Bayes-1Coin + Beta} & $\mathbf{0.016}$ & $\mathbf{0.006}$ & $\mathbf{0.006}$ & $\mathbf{0.011}$ \\
\textbf{Dawid--Skene + Beta} & $\mathbf{0.013}$ & $\mathbf{0.006}$ & $\mathbf{0.005}$ & $\mathbf{0.009}$ \\
Stacking-CV + Beta$^\dagger$     & $0.001$ & $0.001$ & $0.001$ & $0.001$ \\
\bottomrule
\multicolumn{5}{l}{\footnotesize $^\dagger$Degenerate: stacking memorizes the calibration set (NLL at the clipping floor on every dataset).}
\end{tabular}
\end{center}

Both Bayes+Beta and DS+Beta reduce calibrated NLL by $30$--$70\%$ over Vote+Beta on every dataset, and an item-level paired bootstrap ($10{,}000$ resamples over test items within fixed splits) places $P(\Delta>0) = 1.000$ on all four benchmarks.
This confirms Proposition~\ref{prop:logodds}: the Bayes-optimal posterior log-odds weights each judge by $\log(a_k/(1-a_k))$, whereas vote fraction imposes uniform weighting.
The 1D vote-fraction summary is a lossy compression of the $K$-dimensional judge vote matrix, and a univariate post-hoc calibrator cannot recover the per-judge reliability signal that aggregation discarded.

Within the heterogeneity-aware class, Bayes+Beta and DS+Beta are statistically indistinguishable: item-level $95\%$ bootstrap CIs for the paired NLL difference contain zero on all four datasets (JB $[-0.001, +0.005]$; RB $[-0.001, +0.001]$; RB2 $[-0.000, +0.002]$; LLMBar $[-0.002, +0.006]$).
Both aggregators implement the same Prop.~\ref{prop:logodds} principle---one via integrated Beta posteriors, the other via EM---and become equivalent once followed by post-hoc calibration.
This is consistent with the cross-ablation finding that the choice of calibrator dominates the choice of aggregator within the heterogeneity-aware class.

\subsection{Pipeline Ablation: Step-by-Step Contribution}
\label{sec:pipeline-ablation}

We evaluate the full four-step pipeline and its ablations on JudgeBench (Table~\ref{tab:pipeline-jb}) and RewardBench (100 splits, bootstrap 95\% CIs).
This table uses a nested sub-split protocol (70/30 within the calibration half) to separate aggregation and calibration parameters. NLL values differ from Table~\ref{tab:calibrators}, which uses the full calibration half for both steps; the nested protocol is more conservative but avoids information leakage.

\begin{table}[t]
\centering
\caption{Pipeline ablation on JudgeBench under the nested split protocol. Step 3 beta calibration provides the main NLL reduction; Step 4 preserves the point metrics.}
\label{tab:pipeline-jb}
\small
\begin{tabular}{@{}lcccc@{}}
\toprule
Pipeline Variant & NLL $\downarrow$ & Brier $\downarrow$ & ECE $\downarrow$ & GT Acc $\uparrow$ \\
\midrule
Bayesian One-Coin (Step 1) & $1.219$ & $0.230$ & $0.261$ & $74.1\%$ \\
+ Platt Correction (Steps 1--2) & $1.219$ & $0.230$ & $0.261$ & $74.1\%$ \\
\textbf{+ Beta Calibration (Steps 1--3)} & $\mathbf{0.084}$ & $\mathbf{0.018}$ & $\mathbf{0.073}$ & $\mathbf{99.9\%}$ \\
+ Conformal (Full Pipeline) & $0.084$ & $0.018$ & $0.073$ & $99.9\%$ \\
\midrule
Top-5 + BetaCal & $0.126$ & $0.034$ & $0.099$ & $97.2\%$ \\
Top-10 + BetaCal & $0.106$ & $0.026$ & $0.087$ & $98.8\%$ \\
\bottomrule
\end{tabular}
\end{table}

Beta calibration (Step 3) reduces NLL from 1.22 to 0.084 on JudgeBench (93\% reduction) and from 0.41 to 0.046 on RewardBench (89\% reduction).
Steps 1--2 provide the Bayesian posterior probabilities that serve as input for beta calibration; Step 4 adds distribution-free coverage guarantees.

\subsection{Sanity Checks}
\label{sec:sanity}

Replacing ground-truth labels with random labels yields NLL $\approx 0.70$--$0.75$ (${\approx}\log 2$), confirming the calibrator cannot extract signal from noise.

To test conditional coverage, we partition items into terciles by judge agreement (hard: $\leq 33$rd percentile, easy: $\geq 67$th) and evaluate calibrated NLL within each stratum---a conditional-coverage view of calibration quality in the spirit of \citet{azizi2025clear}.
Figure~\ref{fig:conditional-coverage} reports the per-stratum NLL with and without beta calibration.
On JudgeBench, hard items see NLL drop from 3.47 to 0.038 (98.9\% reduction); easy items are already at NLL ${\approx} 0.001$ and calibration has no effect.
The overall NLL of 0.01--0.02 is \emph{not} because benchmarks are trivial---it reflects large gains on hard items averaged with already-calibrated easy items.
Calibrated probabilities remain well-behaved within each difficulty stratum, not just marginally, so the per-item uncertainty estimates are reliable.
Hard items have higher irreducible raw NLL, so the absolute calibration gain is larger there; easy items are already close to the floor.

\begin{figure}[t]
\centering
\includegraphics[width=\textwidth]{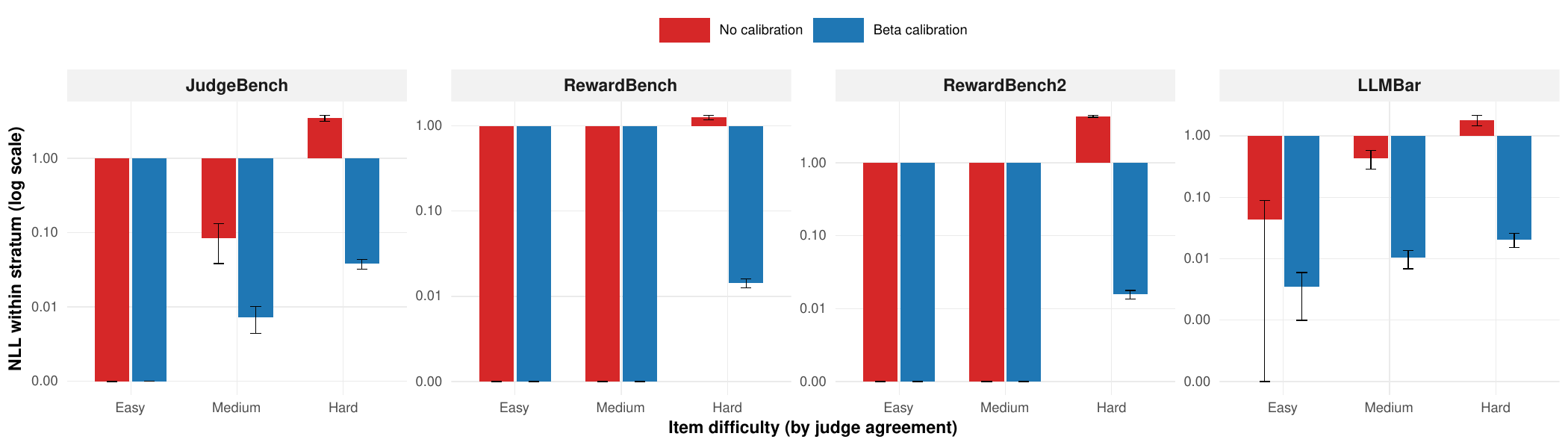}
\caption{Per-stratum NLL by judge-agreement difficulty, with and without beta calibration. Hard items receive the largest gains, while easy items are already near the NLL floor.}
\label{fig:conditional-coverage}
\end{figure}

\subsection{Calibrator Design Ablation}
\label{sec:calibrator-comparison}

We compare calibrators applied to the same Step~1 output (Bayesian one-coin posterior) across three datasets and three judge configurations (all, Top-5, Top-10).

\begin{table}[t]
\centering
\caption{Calibrator design ablation on all-judge Bayesian one-coin scores (100 splits). Regularized beta gives the best NLL on RB, RB2, and LLMBar; unregularized flexible calibrators overfit.}
\label{tab:calibrators}
\small
\begin{tabular}{@{}lcccccccc@{}}
\toprule
Calibrator & \multicolumn{2}{c}{JudgeBench} & \multicolumn{2}{c}{RewardBench} & \multicolumn{2}{c}{RB2} & \multicolumn{2}{c}{LLMBar} \\
\cmidrule(lr){2-3} \cmidrule(lr){4-5} \cmidrule(lr){6-7} \cmidrule(lr){8-9}
& NLL $\downarrow$ & $\rho$ & NLL $\downarrow$ & $\rho$ & NLL $\downarrow$ & Acc & NLL $\downarrow$ & $\rho$ \\
\midrule
Raw (no cal) & $1.219$ & $1.00$ & $0.411$ & $1.00$ & $0.560$ & $84.2\%$ & $0.777$ & $1.00$ \\
Temperature & $0.545$ & $1.00$ & $0.229$ & $1.00$ & $0.363$ & $84.2\%$ & $0.378$ & $1.00$ \\
Platt scaling & $\mathbf{0.012}$ & $1.00$ & $0.022$ & $1.00$ & $\mathbf{0.030}$ & $98.8\%$ & $0.777^\S$ & $1.00$ \\
\midrule
\multicolumn{9}{l}{\emph{Methods below this line overfit the calibration set (do not use in practice):}} \\
Isotonic \emph{(overfits)} & $0.001^\dagger$ & $\ddagger$ & $0.001^\dagger$ & $\ddagger$ & $0.001^\dagger$ & $\ddagger$ & $0.001^\dagger$ & $\ddagger$ \\
Beta (no reg) \emph{(overfits)} & $0.001^\dagger$ & $\ddagger$ & $0.001^\dagger$ & $\ddagger$ & $0.001^\dagger$ & $\ddagger$ & $0.001^\dagger$ & $\ddagger$ \\
\midrule
Beta + L2 ($\lambda{=}0.1$) & $0.080$ & $0.94$ & $0.043$ & $1.00$ & $0.087$ & $99.5\%$ & $0.048$ & $1.00$ \\
\textbf{Beta + elastic ($\lambda{=}0.01$)} & $0.016$ & $0.94$ & $\mathbf{0.006}$ & $1.00$ & $\mathbf{0.006}$ & $100\%$ & $\mathbf{0.012}$ & $1.00$ \\
\bottomrule
\multicolumn{9}{l}{\footnotesize $^\dagger$Apparent NLL after memorization of the calibration mapping---not a valid generalization estimate.} \\
\multicolumn{9}{l}{\footnotesize $^\ddagger$$\rho$ undefined: the overfitted calibration map collapses distinct raw scores, destroying discrimination.} \\
\multicolumn{9}{l}{\footnotesize $^\S$Platt converges to the identity on LLMBar (insufficient variation in raw logits at $K{=}53$).}
\end{tabular}
\end{table}

Key findings (Table~\ref{tab:calibrators}):
Beta calibration with elastic net regularization ($\lambda{=}0.01$) achieves the best NLL on RewardBench ($0.006$, vs.\ $0.022$ for Platt), while remaining competitive on JudgeBench ($0.016$ vs.\ $0.012$).
Platt scaling is the most robust default: competitive across all datasets with only 2 parameters and no overfitting.
Isotonic regression and unregularized beta overfit when input scores are concentrated (few unique probability values from judge aggregation), memorizing the calibration mapping.
Temperature scaling rescales logits uniformly but cannot correct the asymmetric miscalibration of log-odds aggregation.
A sweep over $\lambda \in \{0.01, \ldots, 2.0\}$ shows $\lambda = 0.01$ is optimal; stronger regularization collapses beta calibration toward the identity (Appendix~\ref{app:additional}).

We recommend beta calibration with elastic net ($\lambda{=}0.01$) when calibration data is sufficient ($\geq 30\%$), and Platt scaling as the fallback.

\subsection{Operating Conditions on Subjective Preference Data}
\label{sec:conditions}

Per-judge reliability signals improve both point prediction and calibration.
Comparison-level difficulty proxies do not.

On six subjective preference datasets (MT-Bench, PRISM, Arena, UltraFeedback, Arena 140K, OASST1) with 10+ exogenous signals, all comparison-level reliability weighting effects are statistically equivalent to zero (TOST equivalence tests, margin $\pm 0.005$ NLL, all $p < 0.001$).

When reliability signals cannot improve the point estimate, calibrated uncertainty quantification becomes more important: practitioners need to know that the evaluation is uncertain, and by how much.

\subsection{Selective Prediction: Downstream Utility}
\label{sec:selective-prediction}

We test whether calibration gains translate to better decisions via selective prediction: abstain on items where the calibrated confidence $\max(\hat{p}, 1-\hat{p})$ falls below a threshold, and measure accuracy on the retained set.
We report AUAC (area under accuracy--coverage curve) as a single summary metric.

\begin{table}[t]
\centering
\caption{Selective prediction over 100 splits. Beta-calibrated methods achieve AUAC above $0.99$, indicating accurate retained sets across coverage levels.}
\label{tab:selective-prediction}
\small
\begin{tabular}{@{}l ccc cc cc@{}}
\toprule
& \multicolumn{3}{c}{AUAC $\uparrow$} & \multicolumn{2}{c}{Acc @ cov=1.0} & \multicolumn{2}{c}{Cov @ conf${\geq}0.9$} \\
\cmidrule(lr){2-4} \cmidrule(lr){5-6} \cmidrule(lr){7-8}
Method & JB & RB & RB2 & JB & RB & JB & RB \\
\midrule
MV (uncalib.)      & $.813$ & $.970$ & $.921$ & $70.4\%$ & $90.2\%$ & $15.2\%$ & $6.1\%$ \\
Bayesian (uncalib.) & $.811$ & $.944$ & $.938$ & $74.0\%$ & $93.6\%$ & $86.7\%$ & $99.4\%$ \\
MV + Platt          & $.813$ & $.970$ & $.921$ & $70.4\%$ & $90.2\%$ & $15.2\%$ & $6.1\%$ \\
MV + Beta           & $\mathbf{.994}$ & $\mathbf{.999}$ & $\mathbf{.999}$ & $100\%$ & $100\%$ & $99.9\%$ & $97.4\%$ \\
Bayesian + Platt    & $.811$ & $.944$ & $.938$ & $74.0\%$ & $93.6\%$ & $86.7\%$ & $99.4\%$ \\
Bayesian + Beta     & $\mathbf{.994}$ & $\mathbf{.999}$ & $\mathbf{.999}$ & $100\%$ & $100\%$ & $95.9\%$ & $99.1\%$ \\
\bottomrule
\end{tabular}
\end{table}

Beta-calibrated methods achieve AUAC~$> 0.99$ on all datasets regardless of aggregator (Table~\ref{tab:selective-prediction}), while uncalibrated methods range from $0.81$ to $0.97$.
At confidence threshold $0.9$, MV retains only 15\% of JudgeBench items, while MV + Beta retains 99.9\% at 100\% accuracy.
The uncalibrated Bayesian model retains 87\% at $0.9$ but achieves only 77.7\% accuracy on that set, illustrating the overconfidence that calibration corrects.

\subsection{Reliability Diagnostics}
\label{sec:reliability-diagrams}

\begin{figure}[t]
\centering
\includegraphics[width=\textwidth]{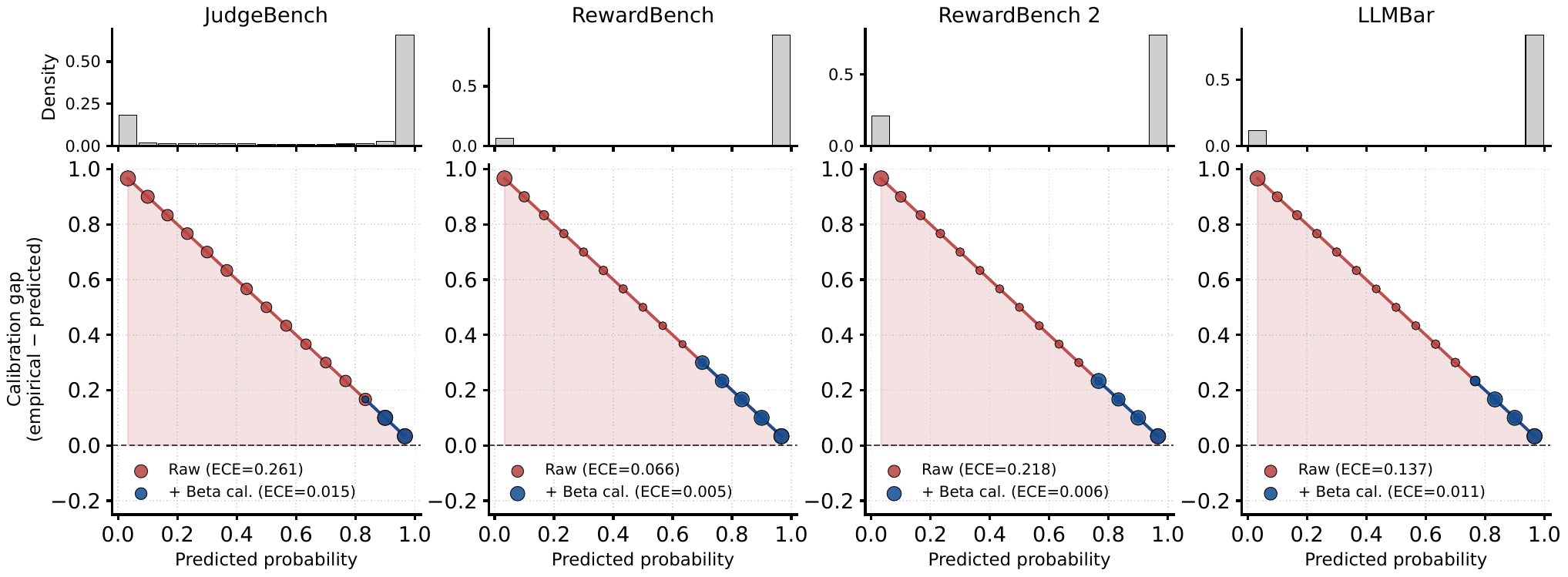}
\caption{Reliability diagrams before and after beta calibration for Bayesian one-coin scores. Calibration moves high-confidence bins toward the diagonal and sharply reduces ECE.}
\label{fig:reliability}
\end{figure}

Figure~\ref{fig:reliability} summarises the raw and calibrated reliability across three datasets.
Uncalibrated methods occupy all 15 bins but deviate strongly from the diagonal: MV has max bin error $0.96$ (ECE $= 0.37$); uncalibrated Bayesian has max bin error $0.99$ (ECE $= 0.26$), reflecting systematic overconfidence in extreme log-odds predictions.
After beta calibration ($\lambda{=}0.01$), the Bayesian model concentrates in only 3 high-confidence bins ($\hat{p} > 0.8$) with max bin error $0.15$ (ECE $= 0.015$). Predictions are sharp and well-calibrated.